\newcommand{\Tr}{{\rm Tr\,}}
\newcommand{\R}{\mathbb{R}}
\newcommand{\E}{\mathbb{E}}
\newcommand{\sgn}{{\rm sgn\,}}
\newcommand{\eins}{\leavevmode\hbox{\small1\kern-3.8pt\normalsize1}}
\newcommand{\be}{\begin{equation}}
\newcommand{\ee}{\end{equation}}
\newcommand{\bee}{\begin{eqnarray}}
\newcommand{\eee}{\end{eqnarray}}
\newtheorem{thm}{Theorem}[section]
\newtheorem{lem}[thm]{Lemma}
\newtheorem{prop}[thm]{Proposition}
\newtheorem*{prob*}{Problem}
\newtheorem*{thm*}{Theorem}
\theoremstyle{definition}
\newtheorem{defn}[thm]{Definition}
\newtheorem{example}[thm]{Example}
\newtheorem*{defn*}{Definition}
\newtheorem{rem}[thm]{Remark}
\newtheorem*{rem*}{Remark}
\numberwithin{equation}{section}
\begin{document}
\title[Averages  of characteristic polynomials in polynomial ensembles]
{\bf{Averages of  products and ratios of characteristic polynomials in polynomial ensembles}}
\author{Gernot Akemann}
\address{Faculty of Physics, Bielefeld University, PO-Box 100131, 33501 Bielefeld, Germany,
 and School of Mathematics and Statistics,
The University of Melbourne, Parkville, VIC 3010, Australia}\email{akemann@physik.uni-bielefeld.de}
\author{Eugene Strahov}
\address{Department of Mathematics, The Hebrew University of Jerusalem, Givat Ram, Jerusalem 91904, Israel}
\email{strahov@math.huji.ac.il}
\author{Tim R. W\"urfel}
\address{Faculty of Physics, Bielefeld University, PO-Box 100131, 33501 Bielefeld, Germany}
\email{twuerfel@physik.uni-bielefeld.de}
\keywords{Polynomial ensembles, random matrices with external field, averages of characteristic polynomials, Giambelli compatible point processes.}
\commby{}
\begin{abstract}
Polynomial ensembles are a sub-class of probability measures within determinantal point processes. Examples include products of independent random matrices, with applications to Lyapunov exponents, and random matrices with an external field, that may serve as schematic models of quantum field theories with temperature.
We first analyse expectation values of ratios of an equal number of characteristic polynomials in general polynomial ensembles. Using Schur polynomials we show that  polynomial ensembles constitute Giambelli compatible point processes, leading to a determinant formula for such ratios
as in classical ensembles of random matrices.
In the second part we introduce invertible polynomial ensembles given e.g. by  random matrices with an external field. Expectation values of arbitrary ratios of characteristic polynomials are expressed in terms of multiple contour integrals.
This generalises previous findings by Fyodorov, Grela, and Strahov 
for a single ratio
in the context of eigenvector statistics in the complex Ginibre ensemble.
\end{abstract}

\maketitle
\section{Introduction}
\label{S:1}

In this paper we study correlation functions of characteristic polynomials in a sub-class of determinantal random point processes.
They are called polynomial ensembles \cite{Arno} and belong to biorthogonal ensembles in the sense of Borodin \cite{Borodin}.
Polynomial ensembles are characterised by the fact that
one of the two determinants in the joint density of points is given by a Vandermonde determinant, while the other one is kept general.
Thus they are generalising the classical ensembles of Gaussian random matrices \cite{Mehta}.
Polynomial ensembles appear in various contexts as the joint distribution of eigenvalues (or singular values) of random matrices, see \cite{Guhr1,BrezinHikami1,DesFor,DesFor2,AKW}. They enjoy many invariance properties
on the level of joint density, kernel and bi-orthogonal functions \cite{ArnoPE,KK2}, and provide examples for realisations of multiple orthogonal polynomials, see e.g. \cite{Arno,DesFor2,BleherK1} and Muttalib-Borodin ensembles \cite{Muttalib, Borodin}.

Random matrices enjoy many different applications in physics and beyond, see \cite{Handbook} and references therein. Polynomial ensembles in particular are relevant in the following contexts: Ensembles with an external field have been introduced as a tool to count intersection numbers of moduli spaces on Riemann surfaces \cite{BHintersect}.
In the application to the quantum field theory of the strong interactions, quantum chromodynamics (QCD), they have been used as a schematic model to study the influence of temperature in the chiral phase transition \cite{PhaseDiag}. Detailed
computations of Dirac operator eigenvalues \cite{GuhrWettig,SWG} within this class of models have been restricted to supersymmetric techniques so far, that can now be addressed in the framework of biorthogonal ensembles.

Recently sums and products of random matrices have been shown to lead to  polynomial ensembles \cite{AKW, CKW, KKS} - see \cite{AI} for a review. This has important consequences for the spectrum of Lyapunov exponents, relating this multiplicative process to the additive process of Dyson's Brownian motion \cite{ABK}.
Last but not least polynomial ensembles of P\'olya type have led to a deeper understanding of the relation between singular values and complex eigenvalues, \cite{KK1,KK2} where a bijection between the respective point processes was constructed.

In this paper we  consider expectation values of products and ratios of characteristic polynomials within the class of polynomial ensembles.  While these can be used to generate multi-point resolvents and thus arbitrary $k$-point density correlation functions, as well as the kernel of bi-orthogonal polynomials,  they are of interest in their own right as well.
Examples for applications are the partition function of QCD  with an arbitrary number of fermionic flavours \cite{ShuryakVerbaar}.
In mathematics the Montgomery conjecture in conjunction with moments of the Riemann zeta-functions has lead to important insights \cite{KeatingSnaith},
where moments and correlations of characteristic polynomials relevant for more general $L$-functions were computed.

Mathematical properties of ratios of characteristic polynomials have equally received attention, and we will not be able to give full justice to the existing literature.
Based on earlier works such as \cite{Strahov} and \cite{BDS},
the determinantal structure of the expectation value of ratios of characteristic polynomials in orthogonal polynomial ensembles was expressed in several  equivalent forms, given in terms of orthogonal polynomials, their Cauchy transform or their respective kernels. This structure was generalised for products of characteristic polynomials in \cite{AV}, as well as to all symmetry classes \cite{KG}.
The universality of such ratios has been studied in several works \cite{SF,BreuerStrahov} and in particular its relation to the sine- and Airy-kernel \cite{BorodinStrahov}. New critical behaviours have been found from such ensembles as well \cite{BrezinHikami2} and 
their universality was discussed in  \cite{BleherK}.

Moving to polynomial ensembles, expectation values of products are easy to evaluate by including them into the Vandermonde determinant, just as for orthogonal polynomial ensembles. Determinantal formulas for expectation values of characteristic polynomials and their inverse have been derived, see e.g. \cite{DesFor, ForresterLiu, BleherK1}. A duality in the number of products and matrix dimension, which is well known for the classical ensembles,  holds also in this external field model \cite{BH08}. The kernel for general polynomial ensembles has been expressed in terms of the residue of a single ratio of characteristic polynomials in \cite{DesFor}, see also  \cite{BorodinStrahov,GGK}.
Most recently the study of eigenvector statistics of random matrices has seen a revival, and also in this context  expectation values of ratios of characteristic polynomials in polynomial ensembles arise \cite{YanVec,Fyodorov}. This has been one of the starting points of the present work.

The outline of the paper is as follows. In Section \ref{S:2} we introduce polynomial ensembles, provide several examples, and state 
the main results of the present paper. In particular, Theorem \ref{TheoremPolynomialIsGiambeliCompatible} says that \textit{any polynomial ensemble is 
a Giambelli compatible point process} in the sense of Borodin, Olshanski, and Strahov \cite{Giambelli}.   
This leads to Theorem \ref{GCThm}, expressing the expectation value of the ratio of an equal number of characteristic polynomials as a determinant of a single ratio, generalising \cite[Theorem 3.3]{BDS} to polynomial ensembles.
In Section \ref{S:4} we introduce
a more restricted class of polynomial ensembles which we call \textit{invertible}. Here, we give a nested multiple complex contour integral representation
 for  general ratios of characteristic polynomials in Theorem \ref{MainTheorem}. The number of integrals only depends on the number of characteristic polynomials, but not on the number of points $N$ of the point process.  This generalises the results of
\cite[Theorem 5.1]{Fyodorov} to  rectangular random matrices, in the presence of an arbitrary number of characteristic polynomials.
Several examples are given that belong to the class of invertible polynomial ensembles, including the external field models.
Sections \ref{S:3} 
 and \ref{Sec:4}
 are devoted to the proofs of the results stated in Section  \ref{S:2}.
Section \ref{SectionSpecialCases} contains some special cases, and comparison with the work by Fyodorov, Grela, and Strahov \cite{Fyodorov}.
Finally, Appendix A collects properties of the Vandermonde determinant, when adding or removing factors.

\section{Definitions and statement of results}
\label{S:2}
\subsection{Polynomial ensembles}\label{SectionPolynomialEnsembles}
We introduce  polynomial ensembles following \cite{Arno}.
They are defined by the probability density function   on $I^n$,  where $I \subseteq \mathbb{R}$ is an interval.
The  probability density function is given by
\begin{equation} \label{densitygeneral}
\mathcal{P}(x_1,\ldots,x_N) = \frac{1}{\mathcal{Z}_N} \Delta_N(x_1,\ldots,x_N)
\det [\varphi_l(x_k)]_{k,l=1}^{N}\ ,
\end{equation}
where $\Delta_N(x_1,\ldots,x_N) = \prod_{1\leq i < j \leq N} (x_i-x_j)=\det\left[x_i^{N-j}\right]_{i,j=1}^N$ is the Vandermonde determinant of $N$ variables.
The  $\varphi_{1},\ldots,\varphi_{_N}$ are certain integrable real-valued functions on $I$, such that the normalisation constant $\mathcal{Z}_N$
\begin{equation}\label{partitionfunction}
\mathcal{Z}_N = \left( \prod_{n=1}^{N} \int_I dx_n \right)  \Delta_N(x_1,\ldots,x_N) \det [\varphi_l(x_k)]_{k,l=1}^{N}= N! (-1)^{N(N-1)/2} \det[G] \ ,
\end{equation}
exists and is non-zero. The constant $\mathcal{Z}_N$ is also called partition function in the physics literature. Polynomial ensembles  are formed by eigenvalues (or singular values) of certain  $N\times N$ random matrices $H$, see examples below.
Here, the matrix $G=(g_{k,l})_{k,l=1}^{N}$ is the invertible generalised moment matrix with entries
\be\label{gentries}
g_{k,l} = \int_I dx\, x^{k-1} \varphi_{l}(x)\ .
\ee
The second equality in \eqref{partitionfunction} follows using \eqref{Vandermonde} and the Andr\'ei\'ef integral formula,
\be
\label{Andreief}
\left( \prod_{n=1}^{N} \int_I dx_n \right)  \det[\psi_l(x_k)]_{k,l=1}^{N} \det [\phi_l(x_k)]_{k,l=1}^{N}
=N!
\det\left[\int_Idx \psi_k(x)\phi_l(x)\right]_{k,l=1}^{N} \ ,
\ee
valid for any two sets of integrable functions $\psi_k$ and $\phi_l$.
We will now give some explicit realisations of polynomial ensembles in terms of random matrices.
The simplest example for a polynomial ensembles is given by the eigenvalues of $N\times N$ complex Hermitian random matrices $H$ from the Gaussian Unitary Ensembles (GUE), defined by the probability measure
\be
\label{GUE}
{P}_{\rm GUE}(H)dH= c_N \exp[-\Tr[H^2]]dH\ ,\quad c_N= {2^{\frac{N(N-1)}{2}}}{\pi^{-\frac{N^2}{2}}}\ .
\ee
The  probability density function of the real eigenvalues $x_1,\ldots,x_N$ of $H$ reads \cite{Mehta}
\be
\label{GUEev}
\mathcal{P}_{\rm GUE}(x_1,\ldots,x_N) = \frac{1}{\mathcal{Z}_N^{\rm GUE}} \Delta_N(x_1,\ldots,x_N)^2 \exp\left[-\sum_{j=1}^Nx_j^2\right].
\ee
This is a polynomial ensemble where the resulting $\varphi$-functions, $\varphi_k(x)=x^{N-k}e^{-x^2}$, are obtained
after multiplying the exponential factors into one of the Vandermonde determinants.
Note that the GUE is an \textit{orthogonal} polynomial ensemble.

The GUE with an external source or field \cite{Guhr1,BrezinHikami1} contains an additional constant, deterministic Hermitian matrix $A$ of size $N\times N$ that we choose to be diagonal here, $A=\mbox{diag}(a_1,\ldots,a_N)$ with $a_j\in\mathbb{R}$ for $j=1,\ldots,N$, without loss of generality. It will constitute our first main example and is defined by the probability measure
\be
\label{GUEext}
{P}_{\rm ext1}(H)dH= {c}_N \exp[-\Tr[ (H-A)^2]]dH\ ,
\ee
with the probability density function
\be
\label{ext1ev}
\mathcal{P}_{\rm ext1}(x_1,\ldots,x_N) = \frac{1}{\mathcal{Z}_N^{\rm ext1}} \Delta_N(x_1,\ldots,x_N) \det\left[\exp[-(x_j-a_k)^2]\right]_{j,k=1}^N.
\ee
The resulting  $\varphi_k(x)=e^{-(x-a_k)^2}$ follows from the Harish-Chandra--Itzykson--Zuber integral \cite{HarishChandra,ItzyksonZuber}  and from multiplying the Gaussian term inside the determinant. We refer to \cite{BrezinHikami1} for the derivation.
Notice that the second determinant in \eqref{ext1ev} cannot be reduced to a Vandermonde determinant in general.

Our second main example is the chiral GUE with an external source, cf. \cite{DesFor}.
It is defined in terms of a  complex non-Hermitian $N\times (N+\nu)$ dimensional random matrix $X$ and a deterministic matrix $A$ of equal size, with $\nu\geq0$. Again without loss of generality we can choose $AA^\dag=\mbox{diag}(a_1,\ldots,a_N)$, with elements $a_j\in\mathbb{R}_+$ for $j=1,\ldots,N$. The ensemble is defined by
\be
P_{\rm ext2}(X)dX  = \hat{c}_N \exp\left[-\Tr[ (X-A)(X^\dag -A^\dag)\right]dX ,\quad \hat{c}_N=c_N \pi^{-N(N+\nu)}.
\label{chGUEext}
\ee
At vanishing $A$ it reduces to the chiral GUE also called 
complex Wishart or Laguerre unitary ensemble. The probability density function of the real positive eigenvalues $x_1,\ldots,x_N$ of $XX^\dag$ reads,
\be
\label{ext2ev}
\mathcal{P}_{\rm ext2}(x_1,\ldots,x_N) = \frac{1}{\mathcal{Z}_N^{\rm ext2}} \Delta_N(x_1,\ldots,x_N)
\det\left[x_j^{\nu/2}e^{-(x_j+a_k)}I_\nu\left(2\sqrt{a_kx_j}\right)\right]_{j,k=1}^N.
\ee
The modified Bessel function of second kind $I_\nu$ inside
$\varphi_k(x)=x^{\nu/2}e^{-(x+a_k)}I_\nu(2\sqrt{a_kx})$
follows from the
Berezin-Karpelevich integral formula, cf. \cite{SchlittgenWettig}. In principle we may also allow the parameter $\nu>-1$ to take real values.

In the application to QCD at finite temperature typically the density \eqref{chGUEext} is endowed with $N_f$ extra terms, $P(X) \to P(X)\prod_{f=1}^{N_f}\det[XX^\dag+m_f^2\eins_N]$, with $m_{f=1,\ldots,N_f}\in\mathbb{R}$, that correspond to $N_f$ Fermion flavours with masses $m_f$, see e.g. \cite{SWG} which also motivates the present study.
We would like to mention that the expectation value of the ratio of two characteristic polynomials  studied in \cite{Fyodorov} follows from the above ensemble, when setting $\nu=0$ and letting $m_f\to0$ for all $f=1,\ldots,N_f$. This leads to a polynomial ensemble with $\varphi_k(x)=x^{\mathcal{L}}e^{-(x+a_k)}I_0(2\sqrt{a_kx})$ of \cite{Fyodorov}, with $\mathcal{L}= N_f$.

Further examples have been given already in the introduction, including the singular values of products of independent random matrices, see \cite{AI} for a review, where $\varphi_k(x)$ is given by a special function, the Mejier G-function, and more generally  Poly\'a ensembles \cite{KK1,KK2}.
Notice that when also  the Vandermonde determinant in \eqref{densitygeneral} is replaced be a general determinant, as in the Andr\'ei\'ef integration formula \eqref{Andreief}, we are back to biorthogonal ensembles \cite{Borodin} - an explicit example can be found in \cite{AStr}. For this class our methods below will not apply in general.

\subsection{Polynomials ensembles as Giambelli compatible point processes}
In this section we adopt notation and definitions from Macdonald \cite{MacDonald}. Let $\Lambda$ be the algebra of symmetric functions.
The Schur functions $s_{\lambda}$ indexed by Young diagrams $\lambda$ form an orthonormal basis in $\Lambda$. Recall that Young diagrams can be written in the Frobenius notation, namely
$$
\lambda=\left(p_1,\ldots,p_d|q_1,\ldots,q_d\right),
$$
where $d$ equals the number of boxes on the diagonal of $\lambda$,
$p_j$ with $j=1,\ldots,d$ denotes the number of boxes
in the $j$-th row of $\lambda$
to the right of the diagonal,
and $q_l$ with $l=1,\ldots,d$  denotes the number of boxes in the $l$-th column of $\lambda$ below the diagonal.
The Schur functions satisfy the \textit{Giambelli formula}:
\begin{equation}
s_{\left(p_1,\ldots,p_d|q_1,\ldots,q_d\right)}=\det\left[s_{(p_i|q_j)}\right]_{i,j=1}^d.
\end{equation}
The Schur polynomial $s_{\lambda}\left(x_1,\ldots,x_N\right)$ is the specialization of $s_{\lambda}$ to the variables
$x_1$, $\ldots$, $x_N$. The Schur polynomial  $s_{\lambda}\left(x_1,\ldots,x_N\right)$ corresponding to the Young diagram $\lambda$
with $l(\lambda)\leq N$ rows of lengths $\lambda_1\geq ... \geq \lambda_{l(\lambda)}>0$, can be defined by
\be\label{Schurfunction}
s_\lambda(x_1,\ldots,x_N)=\frac{1}{\Delta_N(x_1,\ldots,x_N)} \det \left[ x_i^{\lambda_j + N -j} \right]_{i,j=1}^{N}.
\ee
If $l(\lambda)>N$, then $s_\lambda(x_1,\ldots,x_N) \equiv 0$  (by definition).

The \textit{Giambelli compatible} point processes form a class of point processes whose different probabilistic quantities
of interest can be studied using the Schur symmetric functions. This class of point processes was introduced in
Borodin, Olshanski, and Strahov \cite{Giambelli} to prove determinantal identities for averages of analogs of characteristic polynomials
for ensembles originating from Random Matrix Theory, the theory of random partitions, and from representation theory of the infinite symmetric group.
In the context of random point processes formed by $N$-point random configurations on a subset of $\mathbb{R}$, the Giambelli compatible point processes can be defined as
follows.
\begin{defn}\label{Giambellidef}
Assume that a point process is formed by an $N$-point configuration\newline $\left(x_1,\ldots,x_N\right)$ on $I\subseteq\mathbb{R}$.
If the Giambelli formula
\be\label{Giambelliformula}
s_{(p_1,\ldots,p_d\vert q_1,\ldots,q_d)}(x_1,\ldots,x_N) = \det \left[ s_{(p_i \vert q_j)}(x_1,\ldots,x_N) \right]_{i,j=1}^{d}
\ee
(valid for the Schur polynomial  $s_\lambda(x_1,\ldots,x_N)$  parameterized by an arbitrary Young diagram $\lambda\left(p_1,\ldots,p_d|q_1,\ldots,q_d\right)$) can be extended to averages, i.e.
\be\label{Giambellicompatibleformula}
\mathbb{E}\left[
s_{(p_1,\ldots,p_d\vert q_1,\ldots,q_d)}(x_1,\ldots,x_N)
\right] = \det \left[ \mathbb{E}\left[ s_{(p_i \vert q_j)}(x_1,\ldots,x_N) \right] \right]_{i,j=1}^{d}\ ,
\ee
then the random point process is called Giambelli compatible point process.
\end{defn}
In the present paper we show that the polynomial ensembles introduced in Section \ref{SectionPolynomialEnsembles}
can be understood as Giambelli compatible point processes. Namely, the following Theorem holds true.
\begin{thm}\label{TheoremPolynomialIsGiambeliCompatible}
Any polynomial ensemble in the sense of Section \ref{SectionPolynomialEnsembles} is a Giambelli compatible point process.
\end{thm}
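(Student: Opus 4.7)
My plan is to reduce Theorem~\ref{TheoremPolynomialIsGiambeliCompatible} to a classical minor identity of Jacobi/Sylvester type applied to the generalised moment matrix $G$. First I would use the bialternant formula~\eqref{Schurfunction} for $s_\lambda$: inserting it into the expectation cancels the Vandermonde in~\eqref{densitygeneral}, and one application of Andréief's identity~\eqref{Andreief} yields, for any partition $\lambda$ with $l(\lambda)\leq N$,
\begin{equation*}
\mathbb{E}\bigl[s_\lambda(x_1,\ldots,x_N)\bigr]
=\frac{(-1)^{N(N-1)/2}}{\det G}\,
\det\bigl[g_{\lambda_i+N-i+1,\,l}\bigr]_{i,l=1}^{N},
\end{equation*}
a quantity that depends on $\lambda$ only through the strictly decreasing sequence $\mu_i(\lambda):=\lambda_i+N-i$. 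If $\widehat{G}(S)$ denotes the $N\times N$ submatrix of the infinite matrix $(g_{a+1,l})_{a\geq 0,\,l=1,\ldots,N}$ built from the rows indexed by an $N$-element subset $S\subset\mathbb{Z}_{\geq 0}$ (arranged in decreasing order of the index), this rewrites as $\mathbb{E}[s_\lambda]=\det\widehat{G}(\mu(\lambda))/\det\widehat{G}(\mu(\emptyset))$, with $\mu(\emptyset)=\{0,1,\ldots,N-1\}$.

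Next I would translate the Frobenius coordinates of $\lambda=(p_1,\ldots,p_d|q_1,\ldots,q_d)$ into a set-theoretic modification of $\mu(\emptyset)$. A short combinatorial check, most transparent in the particle-hole (Maya-diagram) picture of a partition, gives
\begin{equation*}
\mu(\lambda)
=\bigl(\mu(\emptyset)\setminus\{N-q_j-1:\,j=1,\ldots,d\}\bigr)\cup\{p_i+N:\,i=1,\ldots,d\},
\end{equation*}
so that passing from $\mu(\emptyset)$ to $\mu(\lambda)$ amounts to exactly $d$ swaps; in particular the hook $(p_i|q_j)$ corresponds to the single swap $N-q_j-1\leftrightarrow p_i+N$.

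The main step is then to invoke the following Jacobi/Cauchy-type minor identity, which I would establish as a short preparatory lemma. Let $M_0$ be an $N\times N$ invertible matrix and let $M$ be obtained from $M_0$ by replacing $d$ prescribed rows by new rows $r_1,\ldots,r_d$; let $M_{ij}$ denote the matrix obtained by performing only the single replacement of the $j$-th prescribed row by $r_i$. Then
\begin{equation*}
\frac{\det M}{\det M_0}=\det\!\left[\frac{\det M_{ij}}{\det M_0}\right]_{i,j=1}^{d}.
\end{equation*}
The proof is a one-liner: all but $d$ rows of $MM_0^{-1}$ are standard basis vectors, so cofactor expansion along the unchanged rows reduces $\det(MM_0^{-1})$ to a $d\times d$ minor whose entries, by Cramer's rule, are exactly $\det M_{ij}/\det M_0$. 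Applied with $M_0=\widehat{G}(\mu(\emptyset))$ and the $d$ row replacements prescribed by the Frobenius data above, the left-hand side of this identity becomes $\mathbb{E}[s_\lambda]$ and each entry on the right becomes $\mathbb{E}[s_{(p_i|q_j)}]$, producing exactly~\eqref{Giambellicompatibleformula}.

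The hard part, and really the only subtle point in the argument, is the sign bookkeeping: working with the set-theoretic determinants $\det\widehat{G}(S)$ in a fixed decreasing-order convention, the passage between the unordered description $(\mu(\emptyset)\setminus\{N-q_j-1\})\cup\{p_i+N\}$ and the strictly ordered sequence $\mu(\lambda)$ introduces a permutation sign, and these signs must agree on both sides of~\eqref{Giambellicompatibleformula} (and inside each individual hook factor on the right-hand side). A direct parity count, exploiting the fact that the removed indices $\{N-q_j-1\}\subset\{0,\ldots,N-1\}$ lie strictly below the added indices $\{p_i+N\}\subset\{N,N+1,\ldots\}$ so the relevant permutations factorise, shows that all such signs cancel consistently, giving the theorem with no extraneous factor.
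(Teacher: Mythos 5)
Your argument is correct, but it proceeds along a genuinely different route than the paper. The paper also starts from Andr\'ei\'ef's identity, but then multiplies by the inverse $Q$ of the moment matrix $\tilde G$ to recognise $\mathbb{E}[s_\lambda]$ as a Jacobi--Trudi-type determinant $\det[h_{\lambda_i-i+j,j-1}]$ with column-dependent entries $h_{r,s}=\sum_\nu A_{N+r-s-1,\nu}Q_{\nu,s+1}$, and then quotes Macdonald's general Giambelli-type result (Example I.3.21), the bulk of the work being the verification of its hypotheses (the boundary conventions $h_{0,s}=1$, $h_{r,s}=0$ for $r<0$, the extension to $k\geq l(\lambda)$, and the degenerate case $l(\lambda)>N$). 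You instead stay with minors of the infinite moment matrix: you encode $\mathbb{E}[s_\lambda]$ as a ratio $\det\widehat{G}(\mu(\lambda))/\det\widehat{G}(\mu(\emptyset))$, translate Frobenius coordinates into the particle--hole description of the index set $\mu(\lambda)$, and apply a row-replacement quotient identity proved via $MM_0^{-1}$ and Cramer's rule. This is self-contained (no appeal to Macdonald), makes transparent that Giambelli compatibility here is a Pl\"ucker-type statement about ratios of maximal minors of the matrix $(g_{a+1,l})_{a\ge 0}$, and the inverse moment matrix enters only implicitly through $M_0^{-1}$; the paper's route, by contrast, buys freedom from all sign bookkeeping and Maya-diagram combinatorics by outsourcing them to a quoted theorem. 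Two points in your write-up are asserted rather than carried out, and both do check out: the global reordering sign is $(-1)^{q_1+\cdots+q_d}$ while each single-hook replacement carries $(-1)^{q_j}$ depending only on the column index $j$, so the signs factor out of the $d\times d$ determinant columnwise and cancel exactly as you claim; and the case $l(\lambda)>N$ (excluded by your standing assumption $l(\lambda)\le N$) should be mentioned explicitly, though it is immediate, since then $q_1\ge N$ forces every hook $(p_i|q_1)$ to have more than $N$ rows, so the first column of $\bigl[\mathbb{E}[s_{(p_i|q_j)}]\bigr]$ vanishes and both sides of \eqref{Giambellicompatibleformula} are zero, which is how the paper treats it in its Step 2.
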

As it is explained in Borodin, Olshanski, and Strahov \cite{Giambelli} the Giambelli compatibility of point processes
implies determinantal formulas for averages of ratios of characteristic polynomials. Namely, we obtain
\begin{thm}\label{GCThm}
Assume that $x_1,\ldots,x_N$ form a polynomial ensemble.
Let $u_1,\ldots,u_M \in \mathbb{C} \backslash \mathbb{R}$ and $z_1,\ldots,z_M \in \mathbb{C}$ for any $M \in \mathbb{N}$ be pairwise distinct variables. Then
\be
\mathbb{E}\left[ \prod_{m=1}^{M} \frac{ D_N(z_m)}{ D_N(u_m)} \right] = \left[ \det \left( \frac{1}{u_i-z_j} \right)_{i,j=1}^{M} \right]^{-1} \det \left[ \frac{1}{u_i - z_j} \mathbb{E} \left( \frac{D_N(z_j)}{D_N(u_i)} \right) \right]_{i,j=1}^{M},
\ee
where $D_N(z)=\prod_{n=1}^N(z-x_n)$ denotes the characteristic polynomial associated with the random variables $x_1$, $\ldots$, $x_N$.
\end{thm}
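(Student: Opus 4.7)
The plan is to specialise the general mechanism of Borodin--Olshanski--Strahov \cite{Giambelli} to our setting: there it is shown that whenever a point process is Giambelli compatible, averages of ratios of characteristic polynomials automatically admit a Cauchy-determinant representation of the form claimed in Theorem \ref{GCThm}. Since Theorem \ref{TheoremPolynomialIsGiambeliCompatible} above establishes Giambelli compatibility for arbitrary polynomial ensembles, the present theorem follows by invoking this general principle; the task is to make the underlying mechanism explicit.

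To do so, I would first expand the product on the left-hand side as a series in Schur polynomials of the eigenvalues. Using the Cauchy identity $\prod_{m,n}(1-x_n/u_m)^{-1} = \sum_\mu s_\mu(1/u_1,\ldots,1/u_M)\,s_\mu(x_1,\ldots,x_N)$ and its dual $\prod_{m,n}(1-x_n/z_m) = \sum_\lambda s_{\lambda'}(-1/z_1,\ldots,-1/z_M)\,s_\lambda(x_1,\ldots,x_N)$, together with a Littlewood--Richardson combination, one obtains $\prod_{m=1}^{M} D_N(z_m)/D_N(u_m) = \sum_\nu c_\nu(u;z)\, s_\nu(x_1,\ldots,x_N)$, with coefficients depending only on the parameters $u_m, z_m$. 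Taking expectation and applying Theorem \ref{TheoremPolynomialIsGiambeliCompatible} turns each average $\mathbb{E}[s_\nu(x)]$, for $\nu=(p_1,\ldots,p_d|q_1,\ldots,q_d)$ written in Frobenius notation, into the hook determinant $\det[\mathbb{E}[s_{(p_i|q_j)}(x)]]_{i,j=1}^{d}$, so that all averages reduce to hook Schur averages.

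The final step is to match this expansion with the analogous hook Schur expansion of the single-ratio averages $\mathbb{E}[D_N(z)/D_N(u)]$ and to collapse the resulting double sum into a single $M\times M$ determinant via a Cauchy--Binet-style manipulation. The common Cauchy factors $1/(u_i-z_j)$ emerge naturally from this bookkeeping and account for the prefactor $[\det(1/(u_i-z_j))]^{-1}$ in the formula. The main obstacle is precisely this combinatorial assembly, since the hook indices in the Schur expansion of the multi-ratio must be correctly aligned with those coming from the single-ratio expansion; however, this identity is the abstract content of the construction in \cite{Giambelli}, valid for any Giambelli compatible point process and independent of the polynomial ensemble structure, and it may therefore be invoked rather than re-derived here.
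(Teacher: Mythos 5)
Your proposal is correct and matches the paper's own treatment: Theorem \ref{GCThm} is obtained exactly as you say, by combining Theorem \ref{TheoremPolynomialIsGiambeliCompatible} with the general result of Borodin--Olshanski--Strahov \cite{Giambelli} that Giambelli compatibility implies the Cauchy-type determinantal formula for ratios of characteristic polynomials, and the paper likewise invokes this implication rather than re-deriving it. Your sketch of the internal mechanism (Schur expansion, reduction to hook averages, Cauchy--Binet assembly) is a reasonable outline of what \cite{Giambelli} does, but it is not needed beyond the citation.
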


\subsection{Averages of arbitrary ratios of characteristic polynomials in invertible ensembles}
\label{S:4}
In this section we present our results for  arbitrary ratios of characteristic polynomials,
\be
\mathbb{E}\left[\frac{\prod_{m=1}^{M} D_{N}(z_m)}{\prod_{l=1}^{L} D_{N}(y_l) } \right],
\label{Estart}
\ee
allowing the number of characteristic polynomials in the numerator $M$ and denominator, $L\leq N$, to differ. As before we will assume the parameters
 $y_1,\ldots,y_L \in \mathbb{C} \backslash \mathbb{R}$ and $z_1,\ldots,z_M \in \mathbb{C}$ to be pairwise distinct. We will not consider the most general polynomial ensembles \eqref{densitygeneral} here, but consider functions $\varphi_j(x)$ that satisfy certain conditions to be specified below.
\begin{defn}\label{invert-def}
Consider a polynomial ensemble defined by the probability density function (\ref{densitygeneral}).
Assume that
$\varphi_l(x)=\varphi(a_l,x)$ for $l=1,\ldots,N$, (where $a_1,\ldots,a_N$ are real parameters)
is anaytic in both arguments, 
and that there exists
a family $\left\{\pi_k\right\}_{k=0}^{\infty}$ of monic polynomials such that each 
polynomial 
$\pi_k$ 
of degree $k$
can be represented as
\begin{equation}\label{Gcondition}
\pi_{k}(a)=\int_I dx x^{k}\varphi(a,x), \quad k=0,1,\ldots.
\end{equation}
In addition, assume that equation (\ref{Gcondition}) is invertible, i.e. there exists a function $F: I'\times\mathbb{C}\rightarrow\mathbb{C}$ such that
\begin{equation}\label{Picondition}
z^{k}=\int_{I^\prime}ds F(s,z)\pi_{k}(s),\quad k=0,1,\ldots,
\end{equation}
where $I^\prime$ is a certain contour in the complex plane. Then, we will refer to such a polynomial ensemble as an \textit{invertible} ensemble.
\end{defn}
\begin{rem}
Condition \eqref{Gcondition} together with \eqref{partitionfunction}  immediately implies that for invertible polynomial ensembles the normalising partition function
simplifies as follows:
\be
\label{ZNinv}
\mathcal{Z}_N=N!\Delta_N(a_1,\ldots,a_N).
\ee
Here, we use that in \eqref{Vandermonde} the determinant of monomials equals that of arbitrary monic polynomials.
\end{rem}
We will now present two examples for polynomial ensembles of invertible type according to Definition \ref{invert-def}, before commenting on the general class of such ensembles.

\begin{example}\label{ex1}
Our first example is given by the GUE with external field \eqref{ext1ev}. Here, the eigenvalues take real values, $I=\mathbb{R}$, and the functions $\varphi_l(x)$ can be chosen
as
\be
\varphi_l(x)=\varphi(a_l,x)=\frac{e^{-(x-a_l)^2}}{\sqrt{\pi}},
\label{phiHerm}
\ee
which are analytic.
From \cite[8.951]{Gradshteyn} we know the following representation of the standard Hermite polynomials $H_n(t)$ of degree $n$,
\begin{equation}\label{repHermite}
H_n(t) = \frac{(2i)^n}{\sqrt{\pi}} \int_{-\infty}^{\infty} dx e^{-(x+it)^2} x^n\ ,
\end{equation}
that can be  made monic as follows,
$2^{-n} H_n(x)=x^n+O(x^{n-2})$. This leads to the integral
\be
(2i)^{-n}H_n(ia)=\frac{1}{\sqrt{\pi}}\int_{-\infty}^\infty ds s^n e^{-(s-a)^2}\ ,
\ee
from which we can read off
\be\label{GmatrixentriesHermite}
\pi_{k}(a)=\int_{-\infty}^{\infty} dx x^{k}\frac{e^{-(x-a)^2}}{\sqrt{\pi}}\ ,
\ee
with $\pi_{k}(a)=(2i)^{-k}H_{k}(ia)$, for $k=0,1,\ldots$, which is again monic.
Thus condition \eqref{Gcondition} is satisfied.

For the second condition \eqref{Picondition} we use the integral \cite[7.374.6]{Gradshteyn}
\begin{equation}\label{monomialHermite}
y^{n} = \frac{1}{\sqrt{\pi}} \int_{-\infty}^{\infty} dx \ 2^{-n} H_n(x) e^{-(x-y)^2}
\ .
\end{equation}
Renaming $y=iz$ and $x=is$ we obtain
\be
z^{k}=\int_{I^\prime} ds F(s,z)\pi_{k}(s)\ , \quad \mbox{for}\  k=0,1,\ldots
\ee
with  $I^\prime=i\mathbb{R}$ and $F(s,z)= \frac{i}{\sqrt{\pi}}e^{(s-z)^2}$.
\end{example}
\begin{rem}
Example \ref{ex1} is the simplest case of a much wider class of polynomial ensembles of P\'olya type 
convolved with fixed matrices, as introduced in \cite[Theorem II.3]{MK}. Such polynomials ensembles are generalising the form \eqref{phiHerm} to
\be
\varphi(a_l,x) =f(x-a_l)\ ,
\ee
 such that $f$ is $(N-1)$-times differentiable on $\mathbb{R}$,  
analytic on $\mathbb{C}$, 
and the moments of its derivatives exist\footnote{The unconvoluted polynomial ensemble has $\varphi_j(x)=\partial^j f(x)/\partial x^j$.},
\be
\left|\int_{-\infty}^\infty dx x^{k}\frac{\partial^j f(x)}{\partial x^j}\right|<\infty\ ,\quad \forall k,j=0,1,\ldots, N-1.
\ee
It immediately follows that its generalised moment matrix leads to polynomials, upon shifting the integration variable, and thus \eqref{Gcondition} is satisfied. It is not too difficult to show using Fourier transformation of $f$ that also condition \eqref{Picondition} of Definition \ref{invert-def} is satisfied and thus these ensembles are invertible.
\end{rem}

\begin{example}\label{ex2}
Our second example is the chiral GUE with external field \eqref{ext2ev} having   $I = \mathbb{R}_+$ and functions $\varphi_l(x)$ can be chosen as
\begin{equation}\label{varphiLaguerre}
\varphi_l(x)=\varphi(a_l, x) = \left(\frac{x}{a_l}\right)^{\nu/2} e^{-\left(x+a_l\right)}I_\nu (2\sqrt{a_l x})\ ,
\end{equation}
which is analytic, 
with the $a_l$ 
positive real numbers.
The following integral is known, see e.g. \cite[6.631.10]{Gradshteyn} after analytic continuation,
\begin{equation}\label{gintegral}
\int_0^{\infty} x^{n+\frac{\nu}{2}} e^{-x} I_\nu ( 2 \sqrt{ax}) dx = n! a^{\nu/2} e^{a} L_n^{\nu} \left( -a \right)\ .
\end{equation}
Here, $L_n^\nu(y)$ is the standard generalised Laguerre polynomial of degree $n$, which is made monic as follows, $n!L_n^\nu(-x)=x^n+O(x^{n-1})$.
Then, the first condition \eqref{Gcondition} is satisfied,
\be\label{GmatrixentriesLaguerre}
\pi_{k}(a)=\int_0^{\infty} dx x^{k} \left(\frac{x}{a}\right)^{\frac{\nu}{2}} e^{-(x+a)} I_\nu (2\sqrt{ax}),
\ee
with  $\pi_{k}(a)=k!L_{k}^\nu(-a)$ for $k=0,1,\ldots.$

For the second condition \eqref{Picondition} we consider the following integral, see \cite[7.421.6]{Gradshteyn}, which is also called Hankel transform,
\be
\int_0^\infty dt t^{\nu/2}e^{-t} n!L_n^\nu(t)
J_\nu\left(2\sqrt{zt}\right)=z^nz^{\nu/2}e^{-z}.
\ee
Bringing factors on the other side and making the substitution $t=-s$ to make the same monic polynomials $n!L_n^\nu(-s)$ as above appear in the integrand, we obtain after using $I_\nu(x)=i^{-\nu}J_\nu(ix)$
\be\label{zkexample2}
z^{k}=\int_{I^\prime}ds F(s,z) \pi_{k}(s)\ , \quad \mbox{for}\  k=0,1,\ldots
\ee
with  $F(s,z)= (-1)^\nu \left(\frac{s}{z}\right)^{\nu/2}e^{s+z} I_\nu\left(2\sqrt{zs} \right)$ and $I^\prime = \mathbb{R}_{-}$.
\end{example}

Now we state the second main result of the present paper which gives a formula for
averages of  products and ratios of characteristic polynomials in the case of invertible ensembles.
\begin{thm}\label{MainTheorem}Consider a polynomial ensemble (\ref{densitygeneral}) formed by $x_1$, $\ldots$, $x_N$, and assume that this ensemble is invertible in the sense of
Definition \ref{invert-def}.  Then
we have for $L\leq N$
\be\label{vevgen}
\begin{split}
\mathbb{E}\left[\frac{\prod_{m=1}^{M} D_{N}(z_m)}{\prod_{l=1}^{L} D_{N}(y_l) } \right]
=&\frac{ (-1)^{
\frac{L(L-1)}{2}} }{L!\Delta_M(z_1,\ldots,z_M)}
\left[\prod_{j=1}^M \int_{I^\prime}ds_j F(s_j,z_j)
\prod_{n=1}^N(s_j-a_{n})\!\right]\!\Delta_M(s_1,\ldots,s_M)
\\
&\times
\left[ \prod_{l=1}^{L} \int_I dv_l \left(\frac{v_{l}}{y_l}\right)^{N-L}
\frac{\prod_{m=1}^M(z_m-v_l)}{\prod_{j=1}^{L} (y_j - v_{l})}\right]  \Delta_L(v_1,\ldots,v_L)
\\
&\times
 \left[\prod_{l=1}^{L} \oint_{C_l} \frac{du_l}{2\pi i} \frac{1}{\prod_{n=1}^{N} (u_{l}-a_n)}
\frac{\varphi(u_{l},v_l)}{\prod_{j=1}^M(s_j-u_{l})}\right]
\Delta_L(u_{1},\ldots,u_{L})\ ,
\end{split}
\ee
where $D_N(z)=\prod_{n=1}^N(z-x_n)$ denotes the characteristic polynomial associated with the random variables $x_1$, $\ldots$, $x_N$,
the parameters
$y_1,\ldots,y_L \in \mathbb{C} \backslash \mathbb{R}$ and $z_1,\ldots,z_M \in \mathbb{C}$ are pairwise distinct,
and all contours $C_l$ with $l=1,\ldots,N$ encircle the points $a_1,\ldots,a_N$ counter-clockwise.
\end{thm}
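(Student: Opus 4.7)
The plan is to start from the LHS integral representation and reduce the $N$-fold eigenvalue integration to a product of the three families of auxiliary integrals $s_j,v_l,u_l$ appearing on the RHS. The main vehicles are Andreief's identity \eqref{Andreief}, the two halves \eqref{Gcondition}--\eqref{Picondition} of the invertibility condition, and the Lagrange residue identity converting antisymmetric sums of residues over the parameters $a_n$ into contour integrals.

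\emph{Step 1 (Andreief reduction).} The factor $\prod_n r(x_n)$ with $r(x)=\prod_m(z_m-x)/\prod_l(y_l-x)$ is symmetric in the $x_n$'s and hence can be pulled as a row factor into $\det[\varphi(a_l,x_k)]$. Combined with $\Delta_N(x)=\det[x_k^{N-j}]$, Andreief's identity \eqref{Andreief}, and $\mathcal{Z}_N=N!\Delta_N(a_1,\ldots,a_N)$ (the Remark after Definition~\ref{invert-def}), the eigenvalue integration collapses to
\[
E=\frac{1}{\Delta_N(a_1,\ldots,a_N)}\det_{j,l=1}^N\Bigl[\int_I dx\,\frac{x^{N-j}\prod_{m=1}^M(z_m-x)}{\prod_{j'=1}^L(y_{j'}-x)}\,\varphi(a_l,x)\Bigr].
\]

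\emph{Steps 2--3 (Introducing the auxiliary integrals).} Next one isolates the $L$ $y_l$-poles of the integrand by introducing auxiliary variables $v_l\in I$: peeling off the polynomial-in-$x$ part and using \eqref{Gcondition} (which sends $\int_I x^k\varphi(a_l,x)dx$ to $\pi_k(a_l)$) reduces the $N\times N$ determinant to an $L$-dimensional integral over $(v_1,\ldots,v_L)$; the $(N-L)$ "passive" rows contribute a sub-Vandermonde structure in the $a_l$'s that combines with the prefactor $1/\Delta_N(a)$ to leave the residual factor $(v_l/y_l)^{N-L}$, while antisymmetrisation of the "active" block produces $\Delta_L(v_1,\ldots,v_L)$ together with the combinatorial prefactor $(-1)^{L(L-1)/2}/L!$. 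The remaining antisymmetric $a_l$-dependence, now concentrated in $\varphi(a_l,v_l)$, is converted via the Lagrange identity
\[
\sum_{n=1}^N\frac{g(a_n)}{\prod_{m\neq n}(a_n-a_m)}=\oint_C\frac{du}{2\pi i}\,\frac{g(u)}{\prod_n(u-a_n)}
\]
into $L$ contour integrals $\oint du_l$ encircling $\{a_1,\ldots,a_N\}$; the Vandermonde $\Delta_L(u_1,\ldots,u_L)$ arises from the Cauchy--Binet combination of residues and $\varphi(u_l,v_l)$ replaces $\varphi(a_l,v_l)$. Finally, the polynomial $\prod_m(z_m-u_l)$ coupling the $u$- and $v$-blocks is inverted using \eqref{Picondition}, $z^k=\int_{I'}ds\,F(s,z)\pi_k(s)$, applied to the monic degree-$N$ polynomial $\prod_n(\,\cdot\,-a_n)$: this produces the $M$ integrals $\int_{I'}ds_j\,F(s_j,z_j)\prod_n(s_j-a_n)$, the gluing denominator $1/(s_j-u_l)$, and the ratio $\Delta_M(s_1,\ldots,s_M)/\Delta_M(z_1,\ldots,z_M)$ after antisymmetrisation in the $z_m$'s.

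\emph{Main obstacle.} The delicate point is the block-splitting described above, which must produce the precise exponent $N-L$ in $(v_l/y_l)^{N-L}$ and ensure the correct cancellation against $1/\Delta_N(a)$. A good strategy is to verify first the two extremal cases: $L=0$ (no denominator, so the $(u,v)$-blocks disappear and one is left with a pure integral representation of $\mathbb{E}[\prod_m D_N(z_m)]$ in terms of $F$ and $\prod_n(s-a_n)$), and $L=M=1$ (which must reproduce \cite[Theorem 5.1]{Fyodorov}), before attacking the general $(M,L)$ combinatorics either by induction on $L$ or by a direct Cauchy--Binet expansion of the determinant from Step~1 followed by termwise identification with the RHS.
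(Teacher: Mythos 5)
Your Step 1 (the Andr\'ei\'ef collapse to an $N\times N$ determinant with entries $\int_I dx\,x^{N-j}\,r(x)\,\varphi(a_l,x)$, $r(x)=\prod_m(z_m-x)/\prod_{l'}(y_{l'}-x)$, divided by $\Delta_N(a_1,\ldots,a_N)$) is correct, but the heart of the theorem is in your Steps 2--3, and there you assert rather than derive the crucial reduction. Two concrete problems. First, condition \eqref{Gcondition} converts only pure polynomial moments $\int_I x^k\varphi(a,x)\,dx$ into $\pi_k(a)$; the entries of your Step-1 determinant contain the non-polynomial factor $1/\prod_{l'}(y_{l'}-x)$, so \eqref{Gcondition} cannot be applied to them as you claim without first separating the rational part from the polynomial part. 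That separation is precisely the missing ingredient: in the paper it is Lemma \ref{Strahovlemma} (the Fyodorov--Strahov identity), applied \emph{before} any Andr\'ei\'ef-type integration, which rewrites $\prod_{l}y_l^{N-L}/\prod_n(y_l-x_n)$ as a coset sum over choices of $L$ of the eigenvalues; it is the sole source of the exponent $N-L$ in $(v_l/y_l)^{N-L}$, of $\Delta_L(v_1,\ldots,v_L)$, and of the $1/L!$-type prefactor, and it is what confines the $y$-dependence to $L$ ``active'' variables so that \eqref{Gcondition} and \eqref{Picondition} can be used on the remaining $N-L$ passive ones (after absorbing the numerator factors into an extended Vandermonde). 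Second, your statement that $\Delta_L(u_1,\ldots,u_L)$ ``arises from the Cauchy--Binet combination of residues'' is not a proof: in the paper this step needs the coset-sum Lemma \ref{LemmaSumSymmetricGroup} to organise the permutation sum into ordered index sets $1\le l_1<\cdots<l_L\le N$ with explicit signs, the reduced-Vandermonde identity \eqref{reducedVandermondeGeneral} to cancel $\Delta_N(a_1,\ldots,a_N)$ and produce $\Delta_L(a_{l_1},\ldots,a_{l_L})$ over the factors $\prod_{n\neq l_j}(a_{l_j}-a_n)$, and only then the multi-residue formula \eqref{Residue} together with analyticity of $\varphi$ in its first argument to pass from $a_{l_j}$ to the contour variables $u_l$. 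None of these identities, or equivalents, appear in your argument.

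Your own ``Main obstacle'' paragraph concedes exactly this: the block splitting that must produce $(v_l/y_l)^{N-L}$ and the cancellation against $1/\Delta_N(a_1,\ldots,a_N)$ is left as a strategy (special cases, induction on $L$, a Cauchy--Binet expansion) rather than carried out. As written, the proposal is essentially complete only for $L=0$, where indeed just the $s_j$-integrals survive (compare Section \ref{SectionSpecialCases}), and is an unproven assertion for $L\ge 1$, which includes the case $L=M=1$ needed to recover \cite[Theorem 5.1]{Fyodorov}. To close the gap you should either prove or import an identity of the type of Lemma \ref{Strahovlemma}, apply it first, expand $\det[\varphi_l(x_k)]$ over permutations so that the active and passive blocks can be treated separately, use \eqref{Gcondition} and \eqref{Picondition} only on the passive block, and then perform the coset and residue bookkeeping as in Lemmas \ref{LemmaSumSymmetricGroup} and \ref{reducedVandermondeLemma}.
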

 We note that Theorem \ref{MainTheorem} generalises Theorem 5.1 in \cite{Fyodorov} for the ratio of two characteristic polynomials, derived for the polynomial ensemble with $\varphi(a,x)=x^{\mathcal{L}}e^{-x}I_0(2\sqrt{ax})$, to general ratios in invertible polynomial ensembles. Clearly it is well suited for the asymptotic analysis when $N\to\infty$ as the number of integrations does not depend on $N$.
 
 \subsection{A formula for the correlation kernel for invertible ensembles}
 It is well known that each polynomial ensemble is a determinantal process.
For invertible polynomial ensembles  (see Definition \ref{invert-def})  Theorem \ref{MainTheorem}  enables us to deduce a double contour
integration formula for the correlation kernel.
\begin{prop}\label{PropositionKernel} Consider an invertible polynomial ensemble, i.e. a polynomial ensemble defined by
(\ref{densitygeneral}), where the functions $\varphi_l(x)=\varphi(a_l,x)$ satisfy the conditions specified in Definition  \ref{invert-def}. The correlation kernel
$K_N(x,y)$ of this ensemble can be written as
\begin{equation}
K_N(x,y)=\frac{1}{2\pi i}\int\limits_{I^\prime}dsF(s,x)\prod_{n=1}^N\left(s-a_n\right)\oint\limits_{C}du\frac{\varphi(u,y)}{(s-u)\prod_{n=1}^N\left(u-a_n\right)},
\end{equation}
where $C$ encircles the points $a_1$, $\ldots$, $a_N$ counter-clockwise, and where  $\varphi(u,y)$ and $F(s,x)$ are defined by equations (\ref{Gcondition})and  (\ref{Picondition})
correspondingly.
\end{prop}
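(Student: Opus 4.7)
My plan is to specialise Theorem \ref{MainTheorem} to the minimal case $M=L=1$ and recognise the resulting object as the kernel. With $M=L=1$ the Vandermonde factors and combinatorial prefactor in \eqref{vevgen} trivialise, so the expectation of a single ratio takes the form
\[
\mathbb{E}\!\left[\frac{D_N(z)}{D_N(y)}\right]=\int_I dv\,\frac{z-v}{y-v}\,\widetilde K_N(z,v),
\]
with
\[
\widetilde K_N(x,y)=\frac{1}{2\pi i}\int_{I'} ds\,F(s,x)\prod_{n=1}^N(s-a_n)\oint_C du\,\frac{\varphi(u,y)}{(s-u)\prod_{n=1}^N(u-a_n)}.
\]
This is already the double contour integral stated in the Proposition, so the task reduces to showing $\widetilde K_N=K_N$.

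Next I would evaluate the $u$-contour by residues, picking up only the poles at $u=a_1,\ldots,a_N$ (with $s\in I'$ taken to lie outside $C$, as built into the contour conventions of Theorem \ref{MainTheorem}). The result is
\[
\widetilde K_N(x,y)=\int_{I'} ds\,F(s,x)\sum_{m=1}^N L_m(s)\,\varphi(a_m,y),
\]
where $L_m(s)=\prod_{l\neq m}(s-a_l)/\prod_{l\neq m}(a_m-a_l)$ are the Lagrange interpolating polynomials at the nodes $a_1,\ldots,a_N$.

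Each $L_m$ has degree at most $N-1$ and so expands uniquely as $L_m(s)=\sum_k c_{m,k}\pi_{k-1}(s)$ in the monic basis $\{\pi_k\}$. Evaluating this expansion at $s=a_j$ and using \eqref{Gcondition} to identify $G_{k,j}=\pi_{k-1}(a_j)$ yields $\sum_k c_{m,k}G_{k,j}=\delta_{m,j}$, hence $c_{m,k}=(G^{-1})_{m,k}$. Applying \eqref{Picondition} term by term, $\int_{I'}F(s,x)\pi_{k-1}(s)ds=x^{k-1}$, therefore converts the $s$-integral into $\sum_k x^{k-1}(G^{-1})_{m,k}$, and one recovers the standard biorthogonal expression $K_N(x,y)=\sum_{k,m=1}^N x^{k-1}(G^{-1})_{m,k}\varphi(a_m,y)$ for the correlation kernel of the polynomial ensemble, completing the identification.

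The only delicate points are getting the index conventions on $G^{-1}$ right when passing between the Lagrange and monic bases, and arranging the contours so that $s\in I'$ lies outside $C$; both are routine given the analyticity of $\varphi(u,y)$ in $u$ assumed in Definition \ref{invert-def}, which is what legitimates both the residue evaluation and the necessary contour deformation.
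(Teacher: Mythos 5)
Your argument is essentially correct, but it reaches the conclusion by a different route than the paper. The paper also specialises Theorem \ref{MainTheorem} to $M=L=1$, but then invokes the characterisation from \cite{DesFor}: if $\mathbb{E}\bigl[\prod_k\frac{x-x_k}{z-x_k}\bigr]=\int_I dv\,\frac{x-v}{z-v}\,\Phi_N(x,v)$ with $\Phi_N(x,\cdot)$ analytic at $y\in I$, then $K_N(x,y)=\Phi_N(x,y)$; the factor $\left(\frac{v}{z}\right)^{N-1}$ appearing in the $M=L=1$ specialisation is disposed of by the remark that any ratio $f(v)/f(z)$ under the integral does not affect the residue at $z=y$. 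You instead bypass that lemma entirely: you evaluate the $u$-integral by residues at $u=a_1,\ldots,a_N$ (legitimate, since $\varphi$ is analytic in its first argument, the $a_n$ are pairwise distinct because $\mathcal{Z}_N=N!\Delta_N(a_1,\ldots,a_N)\neq0$, and $C$ is taken to exclude $s\in I'$), recognise the Lagrange interpolation polynomials, expand them in the monic basis $\{\pi_k\}$ using $g_{k,l}=\pi_{k-1}(a_l)$ from \eqref{Gcondition}, and apply \eqref{Picondition} to land on $\sum_{k,m}x^{k-1}(G^{-1})_{m,k}\varphi(a_m,y)$, the standard Eynard--Mehta/biorthogonal kernel of the ensemble (up to the usual transposition $K_N(x,y)\leftrightarrow K_N(y,x)$, which leaves all correlation functions unchanged). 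This is a clean, self-contained verification whose only external input is the standard kernel formula for biorthogonal ensembles, and it does not actually need Theorem \ref{MainTheorem} at all; the paper's route, by contrast, gets the kernel as a corollary of the ratio formula, which is in the spirit of the rest of the paper. Two small blemishes: your opening display misquotes the $M=L=1$ case of \eqref{vevgen} by dropping the factor $\left(\frac{v}{y}\right)^{N-1}$ (harmless for your argument, since you never use that display, but as an identity for the expectation it is wrong without that factor), and the claim that $s\notin\mathrm{int}(C)$ is ``built into'' Theorem \ref{MainTheorem} is only implicit there — it should be stated as a choice of contour, exactly as you do later when justifying the residue computation.
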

\begin{proof}We use the following fact valid for any polynomial ensemble formed by $x_1$, $\ldots$, $x_N$ on $I\subseteq\R$,  see Ref. \cite{DesFor}\footnote{Because we take the residue of the right hand side at $z=y$, any ratio $f(v)/f(z)$ can be multiplied under the integral for regular functions $f$, without changing the value of the kernel.}.
Assume that
\begin{equation}
\mathbb{E}\left(\prod\limits_{k=1}^N\frac{x-x_k}{z-x_k}\right)=\int\limits_Idv\frac{x-v}{z-v}\Phi_N(x,v),
\end{equation}
where the function $v\rightarrow\Phi_N(x,v)$ is analytic at $y$, $y\in I$. Then the correlation kernel of the determinantal process formed by
$x_1,\ldots,x_N$ is  given by
$$
K_N(x,y)=\Phi_N(x,y).
$$
In our case Theorem \ref{MainTheorem} gives
\begin{equation}
\begin{split}
&\mathbb{E}\left(\prod\limits_{k=1}^N\frac{x-x_k}{z-x_k}\right)=
\frac{1}{2\pi i}\int\limits_{I}dv\left(\frac{v}{z}\right)^{N-1}\frac{x-v}{z-v}\\
&\times\left[\int\limits_{I^\prime}dsF(s,x)\prod_{n=1}^N\left(s-a_n\right)\oint\limits_{C}du\frac{\varphi(u,v)}{(s-u)\prod_{n=1}^N\left(u-a_n\right)}\right],
\end{split}
\end{equation}
which leads to the formula for the correlation kernel in the statement of the Proposition.
\end{proof}
\section{Proof of Theorem \ref{TheoremPolynomialIsGiambeliCompatible}}\label{S:3}
Let $x_1$, $\ldots$, $x_N$ form a polynomial ensemble on $I^N$, where $I\subseteq\R$. The probability density function of this ensemble is defined by
equation (\ref{densitygeneral}). Denote by $\widetilde{s}_{\lambda}$ the expectation of the Schur polynomial $s_{\lambda}\left(x_1,\ldots,x_N\right)$
with respect to this ensemble,
\begin{equation}\label{SchurExpectation}
\widetilde{s}_{\lambda}=\E\left(s_{\lambda}\left(x_1,\ldots,x_N\right)\right).
\end{equation}
Our aim is to show that $\widetilde{s}_{\lambda}$ satisfies the Giambelli formula, i.e.
\begin{equation}\label{GiambelliForExpectations}
\widetilde{s}_{\lambda}=\det\left[\widetilde{s}_{\left(p_i|q_j\right)}\right]_{i,j=1}^d,
\end{equation}
where $\lambda$ is an arbitrary Young diagram, $\lambda=\left(p_1,\ldots,p_d\vert q_1,\ldots,q_d\right)$
in the Frobenius coordinates. According to Definition \ref{Giambellidef} this will mean that the polynomial ensemble under considerations
is a Giambelli compatible point process.

The proof of equation (\ref{GiambelliForExpectations}) below is based on the following general fact due to Macdonald,
see Macdonald \cite{MacDonald}, Example I.3.21.
\begin{prop}\label{MacDonaldProp}
Let $\{ h_{r,s} \}$ with integer $r \in \mathbb{Z}$ and non-negative integer $s\in \mathbb{N}$ be a collection of commuting indeterminates such that  we have
\be\label{CondH}
\forall s \in \mathbb{N}: h_{0,s} = 1 \ \mbox{and} \ \forall r<0\  h_{r,s} = 0 \ ,
\ee
and set
\begin{equation}
\widetilde{s}_{\lambda}=\det \left[ h_{\lambda_i-i+j,j-1} \right]_{i,j=1}^{k},
\end{equation}
where $k$ is any number such that $k\geq l(\lambda)$. Then we have
\begin{equation}
\widetilde{s}_\lambda = \det \left[ \widetilde{s}_{(p_i \vert q_j )} \right]_{i,j=1}^{d},
\end{equation}
where $\lambda$ is an arbitrary Young diagram, $\lambda=\left(p_1,\ldots,p_d\vert q_1,\ldots,q_d\right)$
in the Frobenius coordinates.
\end{prop}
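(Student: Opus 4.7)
The plan is to prove the identity by reducing the defining $k \times k$ Jacobi--Trudi-type determinant to a $d \times d$ determinant whose entries are the hook values $\widetilde{s}_{(p_i\vert q_j)}$. The argument proceeds in three stages, each exploiting the normalisation $h_{0,s}=1$ and vanishing $h_{r,s}=0$ for $r<0$ from \eqref{CondH}.

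First, I would show that the value of $\widetilde{s}_\lambda$ does not depend on the choice of $k\geq l(\lambda)$. Going from size $k$ to $k+1$ appends a row with $\lambda_{k+1}=0$, whose $(k+1,j)$-entry is $h_{j-k-1,j-1}$; by \eqref{CondH} this vanishes for $j\leq k$ and equals $h_{0,k}=1$ at $j=k+1$. Cofactor expansion along the new row returns the $k\times k$ determinant. Hence we may take $k=l(\lambda)=q_1+1$ for the rest of the argument.

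Second, I would introduce Frobenius coordinates. For $\lambda=(p_1,\ldots,p_d\vert q_1,\ldots,q_d)$ we have $\lambda_i=p_i+i$ for $1\leq i\leq d$ and, via the conjugate partition $\lambda'_j=q_j+j$, the remaining rows $d<i\leq k$ satisfy $\lambda_i-i=-m_i$ with $m_i>0$. A short check using the Frobenius structure shows that the set $\{m_i: d<i\leq k\}$ equals the complement of $\{q_j+1: 1\leq j\leq d\}$ inside $\{1,2,\ldots,q_1+1\}$. Partition the rows of the Jacobi--Trudi matrix into an upper block $\{1,\ldots,d\}$, whose $(i,j)$-entry is $h_{p_i+j,j-1}$, and a lower block $\{d+1,\ldots,k\}$, whose $(i,j)$-entry is $h_{j-m_i,j-1}$.

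Third, and this is the main step, I would reduce the determinant by column operations. Each lower-block row $i$ carries a distinguished pivot $h_{0,m_i-1}=1$ at column $j=m_i$, with zero entries strictly to its left by \eqref{CondH}. Using these $k-d$ pivots in turn, one clears all remaining entries of the lower block in the non-pivot columns $\{q_j+1:1\leq j\leq d\}$ by subtracting suitable multiples of pivot columns. After these operations and a reordering of columns (collecting pivots to the right), the matrix becomes block-triangular with a $(k-d)\times(k-d)$ identity block and a $d\times d$ upper-left block whose $(i,j)$-entry is the cumulative column-operated image of $h_{p_i+(q_j+1),q_j}$. A direct computation shows that this surviving entry coincides with $\widetilde{s}_{(p_i\vert q_j)}$ expanded along its first row via its own $(q_j+1)\times(q_j+1)$ Jacobi--Trudi representation. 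The tracked signs from the column reordering agree with the usual Frobenius sign, and the Giambelli formula follows.

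The main obstacle will be the bookkeeping in the third step: verifying that the surviving $(i,j)$-entry, obtained as a telescoping alternating sum of upper-block entries after column operations, coincides exactly with the explicit hook determinant $\widetilde{s}_{(p_i\vert q_j)}$. This is an algebraic identity between two Jacobi--Trudi minors of the same infinite matrix $[h_{r,s}]$, and the careful tracking of column indices $m_i$ versus $q_j+1$ (and the resulting signs) is the delicate combinatorial part. All the algebraic content of the proposition sits here; the remaining stages are standard determinant manipulations enabled solely by the conditions in \eqref{CondH}.
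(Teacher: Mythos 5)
Before comparing approaches, note that the paper does not prove this Proposition at all: it is imported verbatim from Macdonald, Example I.3.21, so your attempt is effectively measured against the standard argument there (a compound-determinant identity of Bazin/Jacobi type, or an induction on $d$ peeling off hooks), not against a proof in the text. Your Steps 1 and 2 are correct and standard: the stabilisation in $k$ via cofactor expansion along appended rows, and the identification of the lower-block indices $\{m_i=i-\lambda_i:\, d<i\leq k\}$ with the complement of $\{q_j+1\}$ in $\{1,\dots,q_1+1\}$, are both fine.

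The gap is in Step 3, and it is not a matter of bookkeeping: the entry-wise identification you assert is false. Take $\lambda=(2,2)=(1,0\,|\,1,0)$, for which $d=k=2$; there is no lower block, hence no pivots and no column operations, so your ``cumulative column-operated image'' of $h_{p_i+q_j+1,\,q_j}$ is that entry itself, and your claim would read $h_{3,1}=\widetilde{s}_{(1|1)}=h_{2,0}h_{1,1}-h_{3,1}$, which is false for indeterminates. The failure persists with a nontrivial lower block: for $\lambda=(3,3,1)$ the column-reduced $(1,1)$-entry is $h_{5,2}-h_{1,2}h_{4,1}$, whereas $\widetilde{s}_{(2|2)}=h_{5,2}-h_{1,2}h_{4,1}+h_{3,0}\left(h_{1,1}h_{1,2}-h_{2,2}\right)$. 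In both examples the two $d\times d$ \emph{determinants} agree (the Proposition is true), but the matrices differ entry by entry; they are related by a further unimodular column transformation whose existence is essentially equivalent to the Giambelli identity itself. Put differently: expanding each hook determinant $\widetilde{s}_{(p_i|q_j)}$ along its first row writes the Giambelli matrix as $(\text{upper block})\cdot C$ for a $k\times d$ matrix $C$ depending only on the $q_j$ and on $h_{r,s}$ with $r\geq 1$, while your reduction produces $(\text{upper block})\cdot T_Q$ for the column-operation matrix $T_Q$; one has $C\neq T_Q$, and the whole content of the Proposition is that $T_Q=CU$ with $\det U=\pm 1$ matching your sign. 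Your proposal does not supply this, and the ``direct computation'' it defers to cannot succeed as stated. To repair the argument you need a different mechanism, e.g.\ the Bazin/Sylvester identity applied to the maximal minors of the $k\times(k+d)$ array formed by the rows $h_{p_i+j,\,j-1}$ and the lower-block rows, or an induction on $d$ via Laplace expansion along the first row and column of the Durfee square.
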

Clearly, in order to apply Proposition \ref{MacDonaldProp}  to $\widetilde{s}_{\lambda}$ defined by equation (\ref{SchurExpectation})
we need to construct a collection of indeterminates $\{ h_{r,s} \}$ such that
\begin{equation}\label{g1}
\E\left(s_{\lambda}\left(x_1,\ldots,x_N\right)\right)=\det \left[ h_{\lambda_i-i+j,j-1} \right]_{i,j=1}^{k}
\end{equation}
will hold true for an arbitrary Young diagram $\lambda$, for an arbitrary $k\geq l(\lambda)$, and such that condition (\ref{CondH})
will be satisfied.

By Andr\'ei\'ef's integration formula \eqref{Andreief} and the expression for the normalisation constant $\mathcal{Z}_N$ \eqref{partitionfunction} we can write
\be\label{averageSchur}
\begin{split}
\mathbb{E}\left[ s_\lambda(x_1,\ldots,x_N) \right] = \frac{\det \left[ \int_I dx  x^{\lambda_i +N-i}\varphi_j(x) \right]_{i,j=1}^{N}}{\det \left[ \int_I dx  x^{N-i}\varphi_j(x) \right]_{i,j=1}^{N}}\ ,
\end{split}
\ee
where we used \eqref{Vandermonde} and equation (\ref{Schurfunction}).
Notice that at this point it matters that we consider polynomial ensembles and not more general bi-orthogonal ensembles. In the latter case the Vandermonde determinant in the denominator of the Schur function \eqref{Schurfunction} would not cancel, the Andr\'ei\'ef formula would not apply and we would not know how to compute such expectation values.
Set
\begin{equation}
A_{n,m}=\int_I dx  x^{n}\varphi_m(x);\;\; n=0,1,\ldots;\;\;m=1,\ldots, N,
\end{equation}
and denote by $Q=\left(Q_{i,j}\right)_{i,j=1}^N$ the inverse\footnote{Notice that due to \eqref{gentries} we have $\det [G]=(-1)^{N(N-1)/2}\det[\tilde{G}]$.}  of $\tilde{G}=\left(\tilde{g}_{i,j}\right)_{i,j=1}^N$, where
$\tilde{g}_{i,j}=\int_Idxx^{N-i}\varphi_j(x)$. With this notation we can rewrite equation (\ref{averageSchur}) as
\begin{equation}\label{g2}
\mathbb{E}\left(s_\lambda(x_1,\ldots,x_N) \right) =\det\left[\sum\limits_{\nu=1}^NA_{\lambda_i+N-i,\nu}Q_{\nu,j}\right]_{i,j=1}^N.
\end{equation}
Since $Q$ is the inverse of $\tilde{G}$, we have
\begin{equation}
\sum\limits_{j=1}^N\tilde{g}_{i,j}Q_{j,k}=\delta_{i,k},\;\; 1\leq i,k\leq N,
\end{equation}
or
\begin{equation}\label{g3}
\sum\limits_{j=1}^NA_{N-i,j}Q_{j,k}=\delta_{i,k},\;\; 1\leq i,k\leq N.
\end{equation}
The following Proposition will imply Theorem \ref{TheoremPolynomialIsGiambeliCompatible}.
\begin{prop}
Let $\{ h_{r,s} \}$, with integer $r \in \mathbb{Z}$ and non-negative integer $s\in \mathbb{Z}_{\geq 0}$,  be a collection of indeterminates defined by
\be\label{hfunctioneq}
h_{r,s} \equiv \begin{cases}
\sum_{\nu=1}^{N} A_{N+r-s-1,\nu}Q_{\nu,s+1}\text{,} & s \in \{0,1,\ldots,N-1\} \text{,} \quad r \geq 0 \text{,} \\
\delta_{r,0} \text{,} & s \geq N \text{,} \quad r \geq 0 \text{,} \\
0 \text{,} & s\geq 0 \text{,} \quad r < 0 \text{.}
\end{cases}
\ee
The collection of indeterminates $\{ h_{r,s} \}$ satisfies condition (\ref{CondH}). Moreover,
with this collection of indeterminates $\{ h_{r,s} \}$  formula (\ref{g1})
holds true for an arbitrary Young diagram $\lambda$, and for an arbitrary $k\geq l(\lambda)$.
\end{prop}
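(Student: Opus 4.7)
The plan is to verify the two claims of the proposition in order: first that the collection $\{h_{r,s}\}$ satisfies condition (\ref{CondH}), and then that the determinant identity (\ref{g1}) holds for every Young diagram $\lambda$ and every $k \geq l(\lambda)$. The main tool throughout will be the orthogonality relation (\ref{g3}) coming from $Q = \tilde{G}^{-1}$, together with a block-diagonal reduction argument in the last step.

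For condition (\ref{CondH}), the statement $h_{r,s}=0$ for $r<0$ is built into the definition. The statement $h_{0,s}=1$ splits into two cases: when $s\geq N$ it is immediate since $h_{0,s}=\delta_{0,0}=1$; when $s\in\{0,1,\ldots,N-1\}$ one has $h_{0,s}=\sum_{\nu=1}^N A_{N-s-1,\nu}Q_{\nu,s+1}$, which equals $\delta_{s+1,s+1}=1$ by applying (\ref{g3}) with $i=k=s+1$.

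For the determinant identity, I would first treat the representative case $k=N$. A direct substitution $r=\lambda_i-i+j$, $s=j-1$ into the first branch of the definition of $h_{r,s}$ gives $N+r-s-1 = N+\lambda_i-i$ and $s+1=j$, so whenever $\lambda_i-i+j\geq 0$ the entry $h_{\lambda_i-i+j,j-1}$ coincides exactly with the entry $\sum_\nu A_{\lambda_i+N-i,\nu}Q_{\nu,j}$ in the determinant (\ref{g2}). The cases where $\lambda_i-i+j<0$ need to be handled with care: there $h_{\lambda_i-i+j,j-1}=0$ by definition, and one must check that the corresponding entry of (\ref{g2}) also vanishes. This follows from (\ref{g3}) applied to $A_{\lambda_i+N-i,\nu}$ for those indices where $\lambda_i+N-i\in\{0,\ldots,N-1\}$, giving a Kronecker delta $\delta_{i-\lambda_i,j}$ that vanishes whenever $j<i-\lambda_i$. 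Matching all cases, the two $N\times N$ matrices are identical, so their determinants agree and coincide with $\E[s_\lambda(x_1,\ldots,x_N)]$ by (\ref{g2}).

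To pass from $k=N$ to an arbitrary $k\geq l(\lambda)$, I would exploit a block-triangular structure. When $k>N$, for indices $j>N$ the second branch of the definition gives $h_{\lambda_i-i+j,j-1}=\delta_{\lambda_i-i+j,0}$, and since $l(\lambda)\leq N$ this Kronecker delta yields an identity block in the lower-right $(k-N)\times(k-N)$ corner while the off-diagonal blocks vanish; Laplace expansion then reduces the $k\times k$ determinant to the $N\times N$ case. When $l(\lambda)\leq k<N$, I would instead embed the $k\times k$ determinant into the $N\times N$ determinant of (\ref{g2}) by adjoining rows indexed by $i=k+1,\ldots,N$: since $\lambda_i=0$ for these $i$, equation (\ref{g3}) shows each such row has the form $(\delta_{i,j})_j$, so the $N\times N$ determinant again factorises into the $k\times k$ determinant times an identity block. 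The main subtlety, and the step I expect to require the most care, is the boundary case analysis when $\lambda_i-i+j$ changes sign: one needs to verify by direct computation that the vanishing provided by $r<0$ in the definition of $h_{r,s}$ matches the vanishing forced by (\ref{g3}) in (\ref{g2}) at exactly the same pairs $(i,j)$; once this is done the proposition follows by combining the above identifications with Proposition \ref{MacDonaldProp}, which yields Theorem \ref{TheoremPolynomialIsGiambeliCompatible}.
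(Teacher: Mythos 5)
Your verification of condition (\ref{CondH}) and your treatment of the case $l(\lambda)\leq N$ are correct and essentially coincide with the paper's argument: the entry-by-entry identification of $\det\left[h_{\lambda_i-i+j,j-1}\right]$ with the determinant in (\ref{g2}) via the inverse relation (\ref{g3}), including the check that both sides vanish when $\lambda_i-i+j<0$, and the block-determinant reductions between different sizes $k$ (the paper reduces every $k\geq l(\lambda)$ to $l(\lambda)$, you reduce to $N$; these are interchangeable).

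However, there is a genuine gap: you never treat Young diagrams with $l(\lambda)>N$, whereas the proposition asserts (\ref{g1}) for an \emph{arbitrary} $\lambda$, and this case is needed downstream (Giambelli compatibility concerns all diagrams, and for $l(\lambda)>N$ one has $s_\lambda(x_1,\ldots,x_N)\equiv 0$ by definition, so $\mathbb{E}\left[s_\lambda\right]=0$). Your argument cannot cover it as structured: the base case $k=N$ requires $k\geq l(\lambda)$, your $k>N$ step explicitly invokes $l(\lambda)\leq N$ to produce the identity block, and formula (\ref{g2}) itself is unavailable when $l(\lambda)>N$ because the ratio-of-determinants representation (\ref{Schurfunction}) presupposes at most $N$ rows. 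One must therefore prove separately that $\det\left[h_{\lambda_i-i+j,j-1}\right]_{i,j=1}^{k}=0$ whenever $l(\lambda)>N$. This is what the paper's Step 2 does: for columns $j>N$ one has $s=j-1\geq N$, so the entries equal $\delta_{\lambda_i-i+j,0}$; since $\lambda_i\geq 1$ for all $i\leq l(\lambda)$, the block of rows $i\leq N$ and columns $j>N$ vanishes identically, and the diagonal entries $h_{\lambda_i,\,i-1}=\delta_{\lambda_i,0}$ with $i>N$ vanish as well, so the block structure forces the determinant to be $\det\left[h_{\lambda_i-i+j,j-1}\right]_{i,j=1}^{N}\cdot 0=0$. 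Adding this case would complete your proof; without it, the application of Proposition \ref{MacDonaldProp} to conclude Theorem \ref{TheoremPolynomialIsGiambeliCompatible} is not fully justified.
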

\begin{proof}
We divide the proof of into several steps. First, the collection of indeterminates $\{ h_{r,s} \}$ defined by (\ref{hfunctioneq}) is shown to satisfy condition (\ref{CondH}). 
Next, we prove that equation (\ref{g1}) holds true for an arbitrary Young diagram $\lambda$, and for an arbitrary $k\geq l(\lambda)$.\\
\textit{Step 1.} First, we want to show that
\be\label{g4}
\det \left[ h_{\lambda_i -i+j,j-1} \right]_{i,j=1}^{k}= \det \left[ h_{\lambda_i-i+j,j-1} \right]_{i,j=1}^{l(\lambda)}\ ,
\ee
for any $k\geq l(\lambda)$.

Let $\lambda$ be an arbitrary Young diagram, and assume that $k>l(\lambda)$. Consider the diagonal entries of the $k\times k$ matrix
\be
\left( h_{\lambda_i - i +j,j-1} \right)_{i,j=1}^{k}
\nonumber
\ee
for $i=j \in \{ l(\lambda)+1,\ldots,k \}$. By definition of the $h_{r,s}$ these entries are all equal to $1$, since $\lambda_i = 0$ for $i \in \{ l(\lambda)+1,\ldots,k \}$ implying $h_{0,s}=1$ by condition (\ref{CondH}).
For $r<0$  we have $h_{r,s}=0$ (see  equation (\ref{hfunctioneq})) and the matrix $\left(h_{\lambda_i-i+j,j-1}\right)_{i,j=1}^k$
has the form
\be
\left(\begin{matrix}
\star & \ldots & \star & \vert & \star & \ldots & \ldots & \ldots & \star \\
\vdots & \ddots &  \vdots & \vert  & \vdots & \ddots & \ddots& \ddots & \vdots \\
\star & \ldots & \star & \vert & \star & \ldots & \ldots & \ldots & \star \\
-- & -- & -- & -- & -- & -- & -- & -- & -- \\
0 & \ldots & 0 & \vert & 1 & \star & \ldots & \ldots & \star \\
\vdots & & \vdots & \vert & 0 & 1 & \star  &  \ldots  & \star \\
\vdots & \ddots &  \vdots & \vert  & \vdots & \ddots & \ddots & \ddots & \vdots \\
\vdots &  &  \vdots & \vert  & \vdots &  & \ddots & \ddots & \star \\
0 & \ldots & 0 & \vert & 0 &  \ldots & \ldots & 0 & 1 \\
\end{matrix}\right),
\nonumber
\ee
where the first row from the top with zeros has the label $l(\lambda)+1$, and the first column from the left with ones has the  label $l(\lambda)+1$.
The determinant of such a block matrix reduces to the product of the determinants of the blocks, which gives relation (\ref{g4}).
\\
\\
\textit{Step 2}.
Assume now that $l(\lambda) >N$. Then it trivially holds that
\be
\mathbb{E}\left(s_\lambda(x_1,\ldots,x_N) \right) = 0\ ,
\nonumber
\ee
by the very definition of the Schur polynomials. Here,  we would like  to show that it equally holds that
\be
\det \left[ h_{\lambda_i-i+j,j-1} \right]_{i,j=1}^{l(\lambda)} = 0\ ,
\nonumber
\ee
if $l(\lambda) > N$.

We have $h_{r,s} = \delta_{r,0}$ for $s\geq N$ and $r\geq 0$. This implies that the matrix $\left(h_{\lambda_i-i+j,j-1}\right)_{i,j=1}^{l(\lambda)}$, which we can write out as
\be
\left( \begin{matrix}
h_{\lambda_1,0} & \star & \ldots & \star & \vert & h_{\lambda_1+N,N} & \ldots & \ldots & h_{\lambda_1-1+l(\lambda),l(\lambda)-1} \\
\star & h_{\lambda_2,1} & \ddots & \vdots & \vert &\vdots &  &   & \vdots \\
\vdots & \ddots &\ddots & \star & \vert  & \vdots &  &   & \vdots \\
\star & \ldots & \star & h_{\lambda_N,N-1} &\vert & h_{\lambda_N+1,N} & \ldots & \ldots  & h_{\lambda_N-N+l(\lambda),l(\lambda)-1} \\
-- & -- & -- & -- & -- & -- & -- & -- & -- \\
\star & \ldots & \ldots & \star & \vert & h_{\lambda_{N+1},N} & \ldots & \ldots  & h_{\lambda_{N+1}-N-1+l(\lambda),l(\lambda)-1} \\
\vdots &  &  & \vdots & \vert & \star & \ddots &   & \vdots \\
\vdots & &  & \vdots & \vert & \vdots & \ddots & \ddots  & \vdots \\
\star & \ldots & \ldots &  \star & \vert  & \star & \ldots & \star  & h_{\lambda_{l(\lambda)},l(\lambda)-1} \\
\end{matrix} \right)
\nonumber
\ee
has the form
\be
\left( \begin{matrix}
h_{\lambda_1,0} & \star & \ldots & \star & \vert &0 & \ldots & \ldots & 0 \\
\star & h_{\lambda_2,1} & \ddots & \vdots & \vert &\vdots &  &   & \vdots \\
\vdots & \ddots &\ddots & \star & \vert  & \vdots &  &   & \vdots \\
\star & \ldots & \star & h_{\lambda_N,N-1} &\vert & 0 & \ldots & \ldots  & 0 \\
-- & -- & -- & -- & -- & -- & -- & -- & -- \\
\star & \ldots & \ldots & \star & \vert & 0 & \ldots & \ldots  & 0 \\
\vdots &  &  & \vdots & \vert & \star & \ddots &   & \vdots \\
\vdots & &  & \vdots & \vert & \vdots & \ddots & \ddots  & \vdots \\
\star & \ldots & \ldots &  \star & \vert  & \star & \ldots & \star  &0 \\
\end{matrix} \right).
\nonumber
\ee
Thus, we can again apply the formula for determinants of block matrices to obtain
\be
\det \left[ h_{\lambda_i-i+j,j-1} \right]_{i,j=1}^{l(\lambda)} = \det \left[ h_{\lambda_i-i+j,j-1} \right]_{i,j=1}^{N} \cdot 0 = 0\ ,
\nonumber
\ee
which is true for any $l(\lambda) >N$ and therefore condition (\ref{g1}) is satisfied in this case.\\
\\
\textit{Step 3}.  Now we wish to prove that
\begin{equation}\label{g5}
\sum\limits_{\nu=1}^NA_{N-i+\lambda_i,\nu}Q_{\nu,j}=h_{\lambda_i-i+j,j-1}
\end{equation}
is valid for any Young diagram with $l(\lambda)\leq N$, and for $1\leq i,j\leq N$. Assume that $\lambda_i-i+j\geq 0$.
Then (\ref{g5}) turns into the first equation in (\ref{hfunctioneq}) with $r=\lambda_i-i+j$, $s=j-1$.
Assume that $\lambda_i-i+j<0$, then $i-\lambda_i>j$. Clearly, $i-\lambda_i\in\{1,\ldots,N\}$ in this case, and we have
\begin{equation}
\sum\limits_{\nu=1}^NA_{N-i+\lambda_i,\nu}Q_{\nu,j}=\delta_{i-\lambda_i,j}=0,
\nonumber
\end{equation}
where we have used equation (\ref{g3}). Also, if $\lambda_i-i+j<0$, and $1\leq i,j\leq N$, then $h_{\lambda_i-i+j,j-1}=0$
as it follows from equation (\ref{hfunctioneq}). We conclude that (\ref{g5}) holds true for $\lambda_i-i+j<0$ as well.

Finally, the results obtained in Step 1-Step 3 together with formula (\ref{g2}) give the desired  formula (\ref{g1}).
\end{proof}

\section{Proof of Theorem \ref{MainTheorem}}\label{Sec:4}
Denoting by $S_K$ the symmetric group of a set of $K$ variables with its elements being the permutations of these, we will utilise the following Lemma
that was proven in \cite{Strahov}.
\begin{lem}\label{Strahovlemma}
Let $L$ be an integer with $1\leq L\leq N$, and let $x_1,\ldots,x_N$ and $y_1,\ldots,y_L$ denote two sets of parameters that are pairwise distinct.  Then the following identity holds
\be
\begin{split}
&\prod_{l=1}^{L} \frac{y_l^{N-L}}{\prod_{n=1}^{N} (y_l-x_n)} =
\!\sum_{\sigma \in S_N/(S_{N-L}\times S_L)}\!\!\!\!\!\
\frac{\Delta_L(x_{\sigma(1)},\ldots,x_{\sigma(L)})\Delta_{N-L}(x_{\sigma(L+1)},\ldots,x_{\sigma(N)})\prod_{n=1}^{L}x_{\sigma(n)}^{N-L}}{\Delta_N(x_{\sigma(1)},\ldots,x_{\sigma(N)})\prod_{n,l=1}^{L} (y_l - x_{\sigma(n)})}
\end{split}
\ee
on the coset of the 
permutation group.
\end{lem}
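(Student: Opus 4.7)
The plan is to clear a common denominator, reduce the claim to a polynomial identity in $y_1,\ldots,y_L$, verify that identity at enough specialization points, and then invoke multivariate polynomial interpolation.

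First, I multiply both sides of the asserted identity by $\prod_{l=1}^{L}\prod_{n=1}^{N}(y_l-x_n)$. The left-hand side collapses to $\prod_{l=1}^{L} y_l^{N-L}$. In the $\sigma$-th summand on the right, write $S=\{\sigma(1),\ldots,\sigma(L)\}$; the factor $\prod_{n,l=1}^{L}(y_l-x_{\sigma(n)})$ cancels $\prod_{n\in S}(y_l-x_n)$ from the cleared denominator, leaving $\prod_{l}\prod_{n\in S^c}(y_l-x_n)$. Splitting $\Delta_N$ according to the block partition of indices into $S$ and $S^c$ yields the coset-invariant relation
\[
\frac{\Delta_{L}(x_{\sigma(1)},\ldots,x_{\sigma(L)})\,\Delta_{N-L}(x_{\sigma(L+1)},\ldots,x_{\sigma(N)})}{\Delta_{N}(x_{\sigma(1)},\ldots,x_{\sigma(N)})} \;=\;\prod_{i\in S,\,j\in S^c}\frac{1}{x_i-x_j}.
\]
So the claim reduces to the polynomial identity
\[
\prod_{l=1}^{L}y_l^{N-L} \;=\;\sum_{\substack{S\subseteq[N]\\|S|=L}}\prod_{n\in S}\frac{x_n^{N-L}}{\prod_{m\in S^c}(x_n-x_m)}\,\prod_{l=1}^{L}\prod_{n\in S^c}(y_l-x_n),
\]
where both sides are symmetric in $y_1,\ldots,y_L$ and of degree at most $N-L$ in each $y_l$.

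Next, I verify the polynomial identity on a grid of specialization points. Choose any tuple of pairwise distinct indices $n_1,\ldots,n_L\in[N]$ and set $y_l=x_{n_l}$. The factor $\prod_l\prod_{n\in S^c}(y_l-x_n)$ vanishes unless $\{n_1,\ldots,n_L\}\subseteq S$, which forces $S=\{n_1,\ldots,n_L\}$ since $|S|=L$. For this unique $S$, the surviving product $\prod_{n\in S^c}\prod_{n'\in S}(x_{n'}-x_n)$ cancels $\prod_{n\in S}\prod_{m\in S^c}(x_n-x_m)$ exactly, so the summand collapses to $\prod_{n\in S}x_n^{N-L}=\prod_{l=1}^{L}x_{n_l}^{N-L}$, matching the left-hand side.

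Finally I extend the equality from these specialization points to all of $\mathbb{C}^L$ by multivariate Lagrange interpolation, applied one variable at a time. Fix $y_2=x_{n_2},\ldots,y_L=x_{n_L}$ for any choice of $L-1$ distinct indices; both sides are then polynomials in $y_1$ of degree at most $N-L$, and they agree at the $N-L+1$ values $y_1=x_{n_1}$ with $n_1\in[N]\setminus\{n_2,\ldots,n_L\}$, so they agree identically in $y_1$. Iterating this argument on $y_2,\ldots,y_L$ promotes the identity from the specialization grid to all complex values, which completes the proof.

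The main obstacle is the careful bookkeeping in the first step, specifically the sign tracking involved in reducing the Vandermonde ratio to $\prod_{i\in S,j\in S^c}(x_i-x_j)^{-1}$ and in verifying coset-invariance of the summand; once this normal form is secured, both the residue-by-residue matching at the grid points and the interpolation step are straightforward.
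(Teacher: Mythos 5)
Your proof is correct, and it takes a genuinely different route from the paper. The paper does not prove Lemma \ref{Strahovlemma} at all: it simply cites \cite{Strahov} and notes that the identity follows there from the Cauchy--Littlewood formula together with the determinantal formula \eqref{Schurfunction} for Schur polynomials, i.e.\ from symmetric-function machinery. You instead give a self-contained elementary argument: clearing the common denominator $\prod_{l,n}(y_l-x_n)$, using the block factorisation $\Delta_N=\Delta_L\,\Delta_{N-L}\,\prod_{i\in S,\,j\in S^c}(x_i-x_j)$ to put each coset summand in a manifestly set-dependent (hence coset-invariant) normal form, reducing the claim to a polynomial identity of degree at most $N-L$ in each $y_l$, checking it on the grid $y_l=x_{n_l}$ (where exactly one subset $S$ survives and the cross terms cancel), and finishing by one-variable-at-a-time Lagrange interpolation using $N-L+1$ distinct nodes per variable. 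All the individual steps check out, including the degree count, the forced collapse $S=\{n_1,\ldots,n_L\}$ at the grid points, and the availability of enough interpolation nodes since $L\leq N$ and the $x_n$ are pairwise distinct. What your approach buys is independence from the symmetric-function background and an explicit verification that the summand descends to the coset space $S_N/(S_L\times S_{N-L})$; what the paper's citation buys is brevity and a conceptual placement of the identity as a specialisation of the Cauchy--Littlewood expansion. Either is acceptable; yours has the advantage of being checkable line by line within the paper itself.
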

As shown in \cite{Strahov} this follows from the Cauchy-Littlewood formula and the determinantal formula for the Schur polynomials \eqref{Schurfunction}.
We can use this identity to reduce the number of variables in the inverse characteristic polynomials from $N$ to $L$. Applied to the averages of products and ratios of characteristic polynomials we obtain
\bee
\mathbb{E} \left[\frac{\prod_{m=1}^{M} D_{N}(z_m)}{\prod_{l=1}^{L} D_{N}(y_l) } \right]&=&
\frac{N!}{(N-L)!L!\mathcal{Z}_N} \left[ \prod_{n=1}^{N} \int_I dx_n \prod_{m=1}^M(z_m-x_n)\right] \det [\varphi_l(x_k)]_{k,l=1}^{N}
\nonumber\\
&&\times\frac{\prod_{k=1}^{L}\left(\frac{x_{k}}{y_k}\right)^{N-L}}{\prod_{n,l=1}^{L} (y_l - x_{n})}\
\Delta_L(x_{1},\ldots,x_{L})\Delta_{N-L}(x_{L+1},\ldots,x_{N})\ ,
\label{Estep1}
\eee
where we used the fact that each term in the sum over permutations gives the same contribution to the expectation.
Hence, we can undo the permutations under the sum by a change of variables, and replace the sum over $S_N/(S_{N-L} \times S_L)$ by the cardinality of the coset space $N!/(N-L)!L!$. Next, we expand the determinant over the $\det \left[ \varphi_l(x_k) \right]_{k,l=1}^{N} $ and then separate the integration over the first $L$ variables
$x_{l=1,\ldots,L}$ and the following $N-L$ variables $x_{n=L+1,\ldots,N}$, by also splitting the characteristic polynomials accordingly.
This gives
\begin{equation}\label{Estep2}
\begin{split}
&\mathbb{E} \left[\frac{\prod_{m=1}^{M} D_{N}(z_m)}{\prod_{l=1}^{L} D_{N}(y_l) } \right]\\
&=\frac{N!}{(N-L)!L!\mathcal{Z}_N} \sum_{\sigma \in S_N}\sgn(\sigma)
\left[ \prod_{l=1}^{L} \int_I dx_l \ \varphi_{\sigma(l)}(x_l)
\frac{x_{l}^{N-L}}{y_l^{N-L}}
\frac{\prod_{m=1}^M(z_m-x_l)}{\prod_{j=1}^{L} (y_j - x_{l})}\right]
\Delta_L(x_{1},\ldots,x_{L})\\
&\quad\times
\left[ \prod_{k=L+1}^{N} \int_I dx_k \ \varphi_{\sigma(k)}(x_k)
\prod_{m=1}^M(z_m-x_k)
\right]\Delta_{N-L}(x_{L+1},\ldots,x_{N})\ .
\end{split}
\end{equation}
Because we are aiming at an expression that will be amenable to taking the large-$N$ limit, we now focus on the integrals over $N-L$ variables in the second line, which we denote by $J$. Here, we  make use of one of the properties of the Vandermonde determinant, namely the absorption of the $M$ characteristic polynomials in $J$ into a larger Vandermonde determinant, see  \eqref{extensionVandermonde}, to write
\begin{equation}\label{Jdef}
J=\left[ \prod_{k=L+1}^{N} \int_I dx_k \ \varphi_{\sigma(k)}(x_k)\right]
\frac{\Delta_{N-L+M}(z_1,\ldots,z_M,x_{L+1},\ldots,x_{N})}{\Delta_{M}(z_{1},\ldots,z_{M})}.
\nonumber
\end{equation}
We use the 
representation \eqref{Vandermonde}, pull the integrations $\int_I dx_k \ \varphi_{\sigma(k)}(x_k)$ into the corresponding columns, and use definition \eqref{gentries} of the generalised moment matrix  to obtain
\begin{equation}
J=\frac{1}{\Delta_M(z_1,\ldots,z_M)} \left\vert
\begin{matrix}
z_1^{N+M-L-1} & \ldots & z_M^{N+M-L-1} &  g_{N+M-L,\sigma(L+1)} & \ldots & g_{N+M-L,\sigma(N)} \\
\vdots & \vdots & \vdots & \vdots & \vdots & \vdots  \\
z_1 & \ldots & z_M & g_{2,\sigma(L+1)} & \ldots & g_{2,\sigma(N)} \\
1 & \ldots & 1 & g_{1,\sigma(L+1)} & \ldots & g_{1,\sigma(N)} \\
\end{matrix}
 \right\vert.
 \nonumber
\end{equation}
Property (\ref{Gcondition}) of invertible polynomial ensembles enables us to rewrite $J$ as
\begin{equation}
\begin{split}
J &=\frac{1}{\Delta_M(z_1,\ldots,z_M)}\\
&\quad\times\left\vert
\begin{matrix}
z_1^{N+M-L-1} & \ldots & z_M^{N+M-L-1} &
\pi_{N+M-L-1}(a_{\sigma(L+1)}) & \ldots &
\pi_{N+M-L-1}(a_{\sigma(N)})
\\
\vdots & \vdots & \vdots & \vdots & \vdots & \vdots  \\
z_1 & \ldots & z_M & \pi_{1}(a_{\sigma(L+1)}) & \ldots & \pi_{1}(a_{\sigma(N)}) \\
1 & \ldots & 1 & \pi_{0}(a_{\sigma(L+1)})  & \ldots & \pi_{0}(a_{\sigma(N)}) \\
\end{matrix}
 \right\vert.\nonumber
 \end{split}
\end{equation}
Property (\ref{Picondition}) allows us to replace again the determinant of monic polynomials by a Vandermonde determinant of size $N-L+M$
to obtain
\begin{equation}
J=\frac{\Delta_{N-L}(a_{\sigma(L+1)},\ldots,a_{\sigma(N)})
}{\Delta_M(z_1,\ldots,z_M)}
\left[\prod_{j=1}^M\int_{I^\prime}dt_j F(t_j,z_j)
\prod_{n=L+1}^N(t_j-a_{\sigma(n)})
\right]
\Delta_M(t_1,\ldots,t_M)
\nonumber\\.
\end{equation}
Let us come back to the expectation value of characteristic polynomials in the form \eqref{Estep2} and insert what we have derived for $J$ above. This gives
\begin{equation}\label{Estep3}
\begin{split}
&\mathbb{E}\left[\frac{\prod_{m=1}^{M} D_{N}(z_m)}{\prod_{l=1}^{L} D_{N}(y_l) } \right]\\
&=\frac{N!}{(N-L)!L!\mathcal{Z}_N\Delta_M(z_1,\ldots,z_M)}
\left[\prod_{j=1}^M \int_{I^\prime}dt_j F(t_j,z_j)
\prod_{n=1}^N(t_j-a_{n})\right]
\Delta_M(t_1,\ldots,t_M)\\
&\quad\times\left[ \prod_{l=1}^{L} \int_I dx_l
\left(\frac{x_{l}}{y_l}\right)^{N-L}
\frac{\prod_{m=1}^M(z_m-x_l)}{\prod_{j=1}^{L} (y_j - x_{l})}\right]
\Delta_L(x_{1},\ldots,x_{L})\\
&\quad\times \sum_{\sigma \in S_N}\sgn(\sigma)\Delta_{N-L}(a_{\sigma(L+1)},\ldots,a_{\sigma(N)})
\prod_{l=1}^L
\frac{\varphi(a_{\sigma(l)},x_l)}{\prod_{j=1}^M(t_j-a_{\sigma(l)})}\ .
\end{split}
\end{equation}
The integrals are now put into a form to apply the following Lemma, that will allow us to simplify (and eventually get rid of) the sum over permutations.
\begin{lem}\label{LemmaSumSymmetricGroup}Let $S_N$ denote the permutation group of $\left\{1,\ldots,N\right\}$, and let $S_L$ be the subgroup of $S_N$ realized as the permutation group
of the first $L$ elements $\{1,\ldots,L\}$. Also, let $S_{N-L}$ be the subgroup of $S_N$ realised as the permutation group of the remaining $N-L$
elements $\{L+1,\ldots,N\}$.  Assume that $F$ is a complex valued function on $S_N$ which satisfies the condition
$F(\sigma h)=F(\sigma)$ for each $\sigma\in S_N$, and each $h\in S_L\times S_{N-L}$. Then we have
\begin{equation}\label{SumSymmetricGroup}
\sum\limits_{\sigma\in S_N}F(\sigma)=(N-L)!L!\sum\limits_{1\leq l_1<\ldots<l_L\leq N}F\left(\left(l_1,\ldots,l_{L},1,\ldots,\check{l}_1,\ldots,\check{l}_L\ldots,N\right)\right),
\end{equation} 
where $\left(i_1,\ldots,i_N\right)$ is a one-line notation for the permutation $\left(\begin{array}{cccc}
                                                                                      1 & 2 & \ldots & N \\
                                                                                      i_1 & i_2 & \ldots & i_N 
                                                                                    \end{array}
\right)$, and notation $\check{l}_p$ means that $l_p$ is removed from the list.
\end{lem}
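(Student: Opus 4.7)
The plan is to recognize the hypothesis $F(\sigma h)=F(\sigma)$ as saying that $F$ is constant on right cosets of the subgroup $H=S_L\times S_{N-L}\subset S_N$, and then to decompose the sum over $S_N$ along these cosets. Since each right coset $\sigma H$ contains exactly $|H|=L!(N-L)!$ elements and $F$ takes the single value $F(\sigma)$ on it, the first step gives
\begin{equation*}
\sum_{\sigma\in S_N}F(\sigma)\;=\;L!(N-L)!\sum_{\bar\sigma\in S_N/H}F(\bar\sigma),
\end{equation*}
where any choice of representatives in each coset yields the same result.

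The second step is to parametrize $S_N/H$ explicitly. Writing $\sigma$ in one-line notation as $(\sigma(1),\ldots,\sigma(N))$, right multiplication by an element of $H$ permutes the first $L$ entries among themselves via $S_L$ and the last $N-L$ entries among themselves via $S_{N-L}$. Hence two permutations lie in the same right coset of $H$ iff they have the same unordered $L$-element subset $\{\sigma(1),\ldots,\sigma(L)\}\subseteq\{1,\ldots,N\}$ in the first $L$ slots. The number of such subsets is $\binom{N}{L}=[S_N:H]$, confirming that subsets and cosets are in bijection.

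For each subset $\{l_1<l_2<\cdots<l_L\}$ I would pick the canonical representative whose first $L$ positions list $l_1,\ldots,l_L$ in increasing order and whose remaining $N-L$ positions list the complementary elements $\{1,\ldots,N\}\setminus\{l_1,\ldots,l_L\}$ in their natural increasing order, which is precisely the permutation $(l_1,\ldots,l_L,1,\ldots,\check l_1,\ldots,\check l_L,\ldots,N)$ appearing in the statement. Substituting these representatives into the coset-decomposed sum above yields \eqref{SumSymmetricGroup}.

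The argument is essentially a bookkeeping application of Lagrange's theorem, so there is no serious analytic obstacle. The only point where one must be careful is checking that the proposed canonical representatives actually exhaust the coset space without repetition; this reduces to the observation that an $L$-element subset of $\{1,\ldots,N\}$ together with the increasing-order convention on both blocks uniquely specifies a representative, which is immediate from the bijection established in the second step.
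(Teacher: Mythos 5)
Your proof is correct and follows essentially the same route as the paper: both decompose the sum over $S_N$ into cosets $\sigma H$ of $H=S_L\times S_{N-L}$ (on which $F$ is constant by hypothesis, each of size $L!(N-L)!$) and choose the canonical representatives indexed by $L$-element subsets $\{l_1<\cdots<l_L\}$, with your block-permutation observation in one-line notation supplying the same justification the paper states more tersely. The only quibble is terminological: the sets $\sigma H$ you work with are conventionally called \emph{left} cosets (as in the paper), not right cosets, though this does not affect the argument.
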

\begin{proof} Recall that if $G$ is a finite group, and $H$ is its subgroup, then there are transversal elements 
$t_1,\ldots,t_k\in G$ for the left cosets of $H$ such that $G=t_1H\uplus\ldots\uplus t_kH$, where $\uplus$ 
denotes disjoint union. It follows that if $F$ is a function on $G$ with the property
$F(gh)=F(g)$ for any $g\in G$, and any $h\in H$, then
\begin{equation}\label{SumGroup}
\sum\limits_{g\in G}F(g)=|H|\sum\limits_{i=1}^kF\left(t_i\right),
\end{equation}
where $|H|$ denotes the number of elements in $H$. In our situation $G=S_N$, $H=S_L\times S_{N-L}$, and each transversal element 
can be represented as a permutation
$$
\left(l_1,\ldots,l_{L},1,\ldots,\check{l}_1,\ldots,\check{l}_L\ldots,N\right),
$$ 
written in one-line notation,
where $1\leq l_1<\ldots<l_L\leq N$. Moreover, each  collection of numbers $l_1$, $\ldots$, $l_L$ satisfying the condition $1\leq l_1<\ldots<l_L\leq N$  gives a transversal element 
for the left cosets of $H=S_L\times S_{N-L}$ in $G=S_N$. We conclude that equation (\ref{SumGroup}) is reduced to equation (\ref{SumSymmetricGroup}). 
\end{proof}
Assume that  $\Phi(x_1,\ldots,x_L)$ is  antisymmetric  under permutations $\sigma$ of its $L$ variable, i.e.
$$
\Phi(x_{\sigma(1)},\ldots,x_{\sigma(L)})=\sgn(\sigma)\Phi(x_1,\ldots,x_L),
$$
and that $L\leq N$. Let $F$ be the function on $S_N$ defined by
\begin{equation}
\begin{split}
&F(\sigma)=\sgn(\sigma)  \Delta_{N-L}(a_{\sigma(L+1)},\ldots,a_{\sigma(N)})
\left[ \prod_{k=1}^L\int_Idx_k f(a_{\sigma(k)},x_k)\right]\Phi(x_1,\ldots,x_L)\ ,
\\
\end{split}
\nonumber
\end{equation}
where $f$ is a function of two variables. Clearly, $F$ satisfies the condition $F(\sigma h)=F(\sigma)$ for each $\sigma\in S_N$, and each $h\in S_L\times S_{N-L}$.
Application of Lemma \ref{LemmaSumSymmetricGroup} to this function gives
\begin{equation}
\begin{split}
\sum\limits_{\sigma\in S_N}F(\sigma)&=(N-L)!L!
\sum\limits_{1\leq l_1<\ldots<l_L\leq N}\sgn\left(\left(l_1,\ldots,l_{L},1,\ldots,\check{l}_1,\ldots,\check{l}_L\ldots,N\right)\right)
\\
&\quad\times\Delta_{N-L}^{(l_1,\ldots,l_L)}\left(a_1,\ldots,a_N\right)
\left[ \prod_{k=1}^L\int_Idx_k f(a_{l_k},x_k)\right]\Phi(x_1,\ldots,x_L)\ ,
\end{split}
\nonumber
\end{equation}
where the reduced Vandermonde determinant is defined in \eqref{reducedVandermonde}. 
Taking into account that 
\begin{equation}
\sgn\left(\left(l_1,\ldots,l_{L},1,\ldots,\check{l}_1,\ldots,\check{l}_L\ldots,N\right)\right)=(-1)^{l_1+\ldots+l_L-\frac{L(L+1)}{2}},
\end{equation}
we obtain the formula
\begin{equation}\label{IntegrationFormula}
\begin{split}
&\sum\limits_{\sigma\in S_N}\sgn(\sigma)\Delta_{N-L}(a_{\sigma(L+1)},\ldots,a_{\sigma(N)})\left[ \prod_{k=1}^L\int_Idx_k f(a_{\sigma(k)},x_k)\right]\Phi(x_1,\ldots,x_L)\\
&=(N-L)!L!
\sum\limits_{1\leq l_1<\ldots<l_L\leq N}(-1)^{l_1+\ldots+l_L-\frac{L(L+1)}{2}}
\Delta_{N-L}^{(l_1,\ldots,l_L)}\left(a_1,\ldots,a_N\right)\\
&\quad \times\left[ \prod_{k=1}^L\int_Idx_k f(a_{l_k},x_k)\right]\Phi(x_1,\ldots,x_L)\ ,
\end{split}
\end{equation}
valid for any antisymmetric function $\Phi(x_{\sigma(1)},\ldots,x_{\sigma(L)})$, and for any function
$f(x,y)$ such that the integrals in the equation above exist.

Formula (\ref{IntegrationFormula}) enables us to rewrite equation (\ref{Estep3}) as
\begin{equation}\label{af}
\begin{split}
&\mathbb{E}\left[\frac{\prod_{m=1}^{M} D_{N}(z_m)}{\prod_{l=1}^{L} D_{N}(y_l) } \right]\\
&=\frac{N!}{\mathcal{Z}_N\Delta_M(z_1,\ldots,z_M)}
\left[\prod_{j=1}^M \int_{I^\prime}dt_j F(t_j,z_j)
\prod_{n=1}^N(t_j-a_{n})\right]
\Delta_M(t_1,\ldots,t_M)\\
&\quad\times\left[ \prod_{l=1}^{L} \int_I dx_l
\left(\frac{x_{l}}{y_l}\right)^{N-L}
\frac{\prod_{m=1}^M(z_m-x_l)}{\prod_{j=1}^{L} (y_j - x_{l})}\right]
\Delta_L(x_{1},\ldots,x_{L})
\\
&\quad\times
\sum\limits_{1\leq l_1<\ldots<l_L\leq N}(-1)^{l_1+\ldots+l_L-\frac{L(L+1)}{2}}
\Delta_{N-L}^{(l_1,\ldots,l_L)}\left(a_1,\ldots,a_N\right)\prod_{i=1}^L
\frac{\varphi(a_{l_i},x_i)}{\prod_{j=1}^M(t_j-a_{l_j})}.
\end{split}
\end{equation}
We note that due to \eqref{reducedVandermondeGeneral} it holds
\begin{equation}
\frac{\Delta_{N-L}^{(l_1,\ldots,l_L)}\left(a_1,\ldots,a_N\right)}{\Delta_{N}\left(a_1,\ldots,a_N\right)}
=\frac{(-1)^{l_1+\ldots+l_L-L} \Delta_{L}\left(a_{l_1},\ldots,a_{l_L}\right) }{\underset{n\neq l_1}{\prod\limits_{n=1}^N}\left(a_{l_1}-a_n\right)\ldots\underset{n\neq l_L}{\prod\limits_{n=1}^N}\left(a_{l_L}-a_n\right)}.
\end{equation}
In addition, we   apply \eqref{ZNinv} to eliminate $\mathcal{Z}_N$,  cancel  signs, and see that the strict ordering of the indices $l_1<l_2<\ldots<l_L$ can be relaxed,
\be
L! \sum_{1\leq l_1<\ldots<l_L \leq N} \rightarrow \sum_{l_1=1}^N\cdots\sum_{l_L=1}^N\ .
\nonumber
\ee
Finally, we see that the sum in formula (\ref{af}) can be written as contour integrals, because of the formula
\be\label{Residue}
\frac{1}{2\pi i} \oint_{C} du \frac{f(u)}{\prod_{n=1}^{N} (u-a_n)} = \sum_{l=1}^{N} \frac{f(a_l)}{\prod_{\substack{n=1 \\ n \neq l}}^{N} (a_l -a_n)}\ ,
\ee
where the contour $C$ encircles the points $a_1,\ldots,a_N$ counter-clockwise.
The leads to the formula in the statement of Theorem \ref{MainTheorem}.
\qed


\section{Special cases}\label{SectionSpecialCases}
In Proposition \ref{PropositionKernel} we have used  equation (\ref{vevgen}) in the case $M=L=1$. Another case of interest is that corresponding to products of characteristic polynomials. In this case $L=0$, and we obtain that only 
the first set of integrals remains in \eqref{vevgen}, i.e.
\be
\mathbb{E}\left[{\prod_{m=1}^{M} D_{N}(z_m)}
\right]=
\frac{\det[\mathcal{B}_{i}(z_{j})]_{i,j=1}^M}{\Delta_M(z_1,\ldots,z_M)}\ ,
\ee
where
\be
\label{Bdef}
\mathcal{B}_{i}(z) = \int_{I^\prime}ds F(s,z)\,s^{M-i}
\prod_{n=1}^N(s-a_{n})\ ,
\ee
after pulling the $M$ integrations over the $s_j$'s into the Vandermonde determinant of size $M$.
This result also could have been directly computed  using Lemma \ref{extensionVandermonde}.

As a final special case of interest we look at the ratio of $M+1$ characteristic polynomials over a single one at $L=1$. This object is needed in the application to finite temperature QCD, cf. \cite{SWG}. Theorem \ref{MainTheorem} gives
\bee
&&\mathbb{E}\left[\frac{\prod_{m=1}^{M+1} D_{N}(z_m)}{D_{N}(y) } \right]\nonumber\\
&=&\frac{1}{\Delta_{M+1}(z_1,\ldots,z_{M+1})}
\left(\prod_{j=1}^{M+1} \int_{I^\prime}ds_j F(s_j,z_j)
\prod_{n=1}^N(s_j-a_{n})\right)\Delta_{M+1}(s_1,\ldots,s_{M+1})
\nonumber\\
&&\times
\int_I dv \left(\frac{v}{y}\right)^{N-1}
\frac{\prod_{m=1}^{M+1}(z_m-v)}{(y - v)}
\oint_{C} \frac{du}{2\pi i} \frac{1}{\prod_{n=1}^{N} (u-a_n)}
\frac{\varphi(u,v)}{\prod_{j=1}^{M+1}(s_j-u)}\ .
\label{EM1}
\eee
Following \cite{DesFor}, we may use the Lagrange extrapolation formula
\be
\frac{1}{\prod_{j=1}^{M+1}(u-s_j)}=\sum_{m=1}^{M+1}\frac{1}{u-s_m}\prod_{\substack{j=1\\j\neq m}}^{M+1}\frac{1}{s_m-s_j}\ ,
\ee
to rewrite
\be
\frac{1}{\prod_{j=1}^{M+1}(s_j-u)}\Delta_{M+1}(s_1,\ldots,s_{M+1})
=(-1)^{M+1}\sum_{m=1}^{M+1}\frac{(-1)^{m-1}}{u-s_m}\Delta_M^{(m)}(s_1,\ldots,s_{M+1})\ .
\ee
This leads to the following rewriting of \eqref{EM1}
\bee
&&\mathbb{E} \left[\frac{\prod_{m=1}^{M+1} D_{N}(z_m)}{D_{N}(y) } \right]\nonumber\\
&=&\frac{ (-1)^{M} }{\Delta_{M+1}(z_1,\ldots,z_{M+1})}
\int_I dv \left(\frac{v}{y}\right)^{N-1}
\frac{\prod_{m=1}^{M+1}(z_m-v)}{(y - v)}
\oint_{C} \frac{du}{2\pi i} \frac{\varphi(u,v)}{\prod_{n=1}^{N} (u-a_n)}
\nonumber\\
&&\times
\sum_{m=1}^{M+1}(-1)^{m}
\left(\prod_{j=1}^{M+1} \int_{I^\prime}ds_j  F(s_j,z_j)
\prod_{n=1}^N(s_j-a_{n})\right)\frac{1}{u-s_m}\Delta_M^{(m)}(s_1,\ldots,s_{M+1})
\nonumber\\
&=& \frac{ (-1)^{M} }{\Delta_{M+1}(z_1,\ldots,z_{M+1})}
\int_I dv \left(\frac{v}{y}\right)^{N-1}
\frac{\prod_{m=1}^{M+1}(z_m-v)}{(y - v)}
\oint_{C} \frac{du}{2\pi i} \frac{\varphi(u,v)}{\prod_{n=1}^{N} (u-a_n)}
\nonumber\\
&&\times
\det \left[ \begin{matrix}
\mathcal{A}(z_1,u) & \ldots & \mathcal{A}(z_{M+1},u) \\
\mathcal{B}_{1}(z_1) & \ldots & \mathcal{B}_{1}(z_{M+1}) \\
\vdots & \ldots & \vdots \\
\mathcal{B}_{M}(z_1) & \ldots & \mathcal{B}_{M}(z_{M+1}) \\
\end{matrix} \right] ,
\label{EM1step2}
\eee
where we have defined
\be
\mathcal{A}(z,u) = \int_{I^\prime}ds F(s,z) \frac{-1}{u-s}
\prod_{n=1}^N(s-a_{n})\ .
\label{Adef}
\ee
In the second step in \eqref{EM1step2} we have first pulled all the $s$-integrals except the one over $s_m$ into the Vandermonde determinant $\Delta_M^{(m)}(s_1,\ldots,s_{M+1})$, leading to a determinant of size $M$ with matrix elements $\mathcal{B}_i(z_j)$ \eqref{Bdef}. We then recognise that the sum is a Laplace expansion of a determinant of size $M+1$ with respect to the first row, containing the matrix elements $\mathcal{A}(z_j,u)$ \eqref{Adef}.
This reveals the determinantal form of the corresponding kernel.

Fyodorov, Grela, and Strahov \cite{Fyodorov} considered the probability density defined by\footnote{We use a different convention than \cite{Fyodorov} where the factor $e^{-a_l}$ is part of the normalisation, cf. \eqref{ext2ev}.}
\begin{equation}
P_N^{\mathcal{L}}\left(x_1,\ldots,x_N\right)=\frac{1}{\mathcal{Z}_N^{\mathcal{L}}}\Delta_N\left(x_1,\ldots,x_N\right)
\det\left[x_k^{\mathcal{L}}e^{-(x_k+a_l)}I_0\left(2\sqrt{a_lx_k}\right)\right]_{k,l=1}^N 
\end{equation}
on $\mathbb{R}_+^N$. 
Note that this polynomial ensemble is invertible only for $\mathcal{L}=0$ as it follows from equations (\ref{GmatrixentriesLaguerre}) and (\ref{zkexample2}).
However, computations of different averages with respect to $P_N^{\mathcal{L}}$ can be reduced to those with respect
to $P_N^{\mathcal{L}=0}$, i.e. with respect  to an invertible ensemble. Indeed, we have
\begin{equation}\label{reduction}
\E_{P_N^{\mathcal{L}}}\left(f\left(x_1,\ldots,x_N\right)\right)
=\frac{\E_{P_N^{\mathcal{L}=0}}\left(f\left(x_1,\ldots,x_N\right)
\prod_{l=1}^{\mathcal{L}}D_N\left(z_l\right)\right)}{\E_{P_N^{\mathcal{L}=0}}\left(
\prod_{l=1}^{\mathcal{L}}D_N\left(z_l\right)\right)}\biggl|_{z_1,\ldots,z_{\mathcal{L}}=0}
\end{equation}
for any function $f\left(x_1,\ldots,x_N\right)$ such that the expectations in the formula above exist.
In particular,  we can reproduce the results of \cite{Fyodorov}  for the expectation value of a single characteristic polynomial, its inverse or a single ratio. Without going much into detail, we need two ingredients for this check. First, in order to perform the limit of vanishing arguments in equation (\ref{reduction}), it is useful to antisymmetrise the product of the first $\mathcal{L}$ functions $F(t_j,z_j)$ using the Vandermonde determinant $\Delta_{\mathcal{L}+1}(t_1,\ldots,t_{\mathcal{L}+1})$ in \eqref{vevgen}. We are then led to consider
\bee
\lim_{z_1,\ldots,z_{\mathcal{L}}\to0}\frac{\det\left[I_0(2\sqrt{z_it_j})\right]_{i,j=1}^{\mathcal{L}}}{\Delta_{\mathcal{L}}(z_1,\ldots,z_{\mathcal{L}})}&=&\lim_{z\to0}\det\left[\frac{t_i^{j-1}}{(j-1)!}\frac{I_{j-1}(2\sqrt{zt_i})}{\sqrt{zt_i}^{j-1}}\right]_{i,j=1}^{\mathcal{L}}
\nonumber\\
&=&\frac{(-1)^{\mathcal{L}(\mathcal{L}-1)/2}}{\prod_{j=1}^{\mathcal{L}}(j-1)!^2}\Delta_{\mathcal{L}}(t_1,\ldots,t_{\mathcal{L}})\ ,
\label{degenerate}
\eee
after first taking the limit of degenerate arguments, which is then sent to zero. Obviously, we first separate the remaining non-vanishing argument $z_{\mathcal{L}+1}$ from the Vandermonde determinant by
 $\Delta_{\mathcal{L}}(z_1,\ldots,z_{\mathcal{L}})\prod_{l=1}^{\mathcal{L}}(z_l-z_{\mathcal{L}+1})=\Delta_{\mathcal{L}+1}(z_1,\ldots,z_{\mathcal{L}+1})$.

Second, we need an equivalent formulation of Propositions 3.1 and 3.5 employed in \cite{Fyodorov}, which are due to \cite{ForresterLiu,DesFor}, respectively.\footnote{Notice the different convention for $\Delta_N$ used in \cite{Fyodorov,ForresterLiu,DesFor}.}
\begin{prop}\label{Prop3.1-5}
\bee
\mathbb{E}_{\mathcal{P}} \left[\frac{1}{D_{N}(y) } \right]
&=&\frac{1}{\det G}
\left\vert
\begin{matrix}
 g_{1,1} 
 & \ldots & g_{1,N} \\
 \vdots 
 & \ddots & \hdots  \\
g_{N-1,1} 
& \ldots & g_{N-1,N} \\
\int_0^\infty\frac{du\varphi_1(u)}{y-u}\left(\frac{u}{y}\right)^{N-1} 
&   \ldots &
\int_0^\infty\frac{du\varphi_N(u)}{y-u}\left(\frac{u}{y}\right)^{N-1} \\
\end{matrix}
 \right\vert
 \nonumber\\
 &=&
 \int_0^\infty\frac{du}{y-u}\left(\frac{u}{y}\right)^{N-1} \sum_{j=1}^N c_{N,j} \varphi_j(u)\ ,
 \label{Prop3factorN}
\eee
where $C$ is the inverse of the $N\times N$ moment matrix $G$, and $c_{i,j}$ are the matrix elements of $C^T$.
\end{prop}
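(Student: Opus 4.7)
The strategy is to reduce $\mathbb{E}_{\mathcal{P}}[1/D_N(y)]$ to a single $N\times N$ determinant via Andr\'ei\'ef's integration formula, then use elementary row operations to bring it into the stated matrix form, and finally expand along the last row to extract the summation expression.

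First I would use the Vandermonde representation $\Delta_N(x_1,\ldots,x_N)=\det[x_i^{N-j}]_{i,j=1}^N$ and absorb the factor $\prod_{n=1}^N(y-x_n)^{-1}$ row by row into this determinant. Pairing the result with $\det[\varphi_l(x_k)]$ and invoking \eqref{Andreief} then produces
\[
\mathbb{E}_{\mathcal{P}}\!\left[\frac{1}{D_N(y)}\right]=\frac{N!}{\mathcal{Z}_N}\,\det\!\left[\int_I dx\,\frac{x^{N-i}}{y-x}\,\varphi_j(x)\right]_{i,j=1}^{N}.
\]
Substituting $\mathcal{Z}_N=N!(-1)^{N(N-1)/2}\det G$ from \eqref{partitionfunction} reduces the overall prefactor to $(-1)^{N(N-1)/2}/\det G$.

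The central manipulation is to rewrite each row using the geometric-series identity
\[
\frac{x^{N-i}}{y-x}=-\sum_{m=0}^{N-i-1}y^{N-i-1-m}x^m+\frac{y^{N-i}}{y-x},
\]
which exhibits row $i$ as an explicit linear combination of the moment rows $\vec g_k=(g_{k,1},\dots,g_{k,N})$ from \eqref{gentries} and of the ``resolvent'' row $\vec h=\bigl(\int_I dx\,\varphi_j(x)/(y-x)\bigr)_{j=1}^N$. Applying $R_i\mapsto R_i-yR_{i+1}$ in turn for $i=1,\dots,N-1$ collapses the top $N-1$ rows to $-\vec g_{N-1},\dots,-\vec g_{1}$ while leaving $R_N=\vec h$ untouched. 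Pulling out the $N-1$ minus signs and reversing the order of the first $N-1$ rows (at a cost of $(-1)^{(N-1)(N-2)/2}$ sign flips) combines with the prefactor $(-1)^{N(N-1)/2}$ from $\mathcal Z_N$ into exactly $1/\det G$ times a determinant whose rows are $\vec g_1,\ldots,\vec g_{N-1},\vec h$. Finally, the same identity applied to $x^{N-1}/(y-x)$ shows that subtracting $\sum_{k=1}^{N-1}y^{-k}\vec g_k$ from $\vec h$ replaces the last row by $\bigl(\int_I du\,\varphi_j(u)(u/y)^{N-1}/(y-u)\bigr)_{j=1}^N$, establishing the first equality.

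For the second equality I would Laplace expand along the last row: the cofactor of the $j$-th entry is $(-1)^{N+j}M_{N,j}$, where $M_{N,j}$ denotes the $(N,j)$-minor of $G$. By Cramer's rule, $c_{N,j}=(C^T)_{N,j}=(G^{-1})_{j,N}=(-1)^{N+j}M_{N,j}/\det G$, so the signs cancel and exchanging the finite sum with the $u$-integral yields $\int_I(u/y)^{N-1}(y-u)^{-1}\sum_{j=1}^N c_{N,j}\varphi_j(u)\,du$. The main obstacle is the careful bookkeeping of the various signs — from $\mathcal{Z}_N$, from the minus signs introduced by the $N-1$ row reductions, and from the row reversal — but these combine cleanly to give the stated $1/\det G$ normalisation.
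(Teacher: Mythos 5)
Your argument is correct, and for the first equality it takes a genuinely different route from the paper. The paper does not rederive the determinant from the ensemble: it takes the formula \emph{without} the factors $(u/y)^{N-1}$ as given (quoting \cite{Fyodorov}, following \cite{ForresterLiu,DesFor}) and shows equivalence by expanding $1/(y-u)$ as an infinite geometric series inside the last row, removing the first $N-1$ terms by subtracting the moment rows, and resumming. You instead build the whole identity from scratch: Andr\'ei\'ef's formula gives $\frac{N!}{\mathcal{Z}_N}\det\bigl[\int dx\,x^{N-i}\varphi_j(x)/(y-x)\bigr]$, and your row operations $R_i\mapsto R_i-yR_{i+1}$ (equivalently, $\int \frac{x^{N-i}-yx^{N-i-1}}{y-x}\varphi_j\,dx=-g_{N-i,j}$) collapse the top rows to the moment rows, with the sign bookkeeping $(-1)^{N(N-1)/2}\cdot(-1)^{N-1}\cdot(-1)^{(N-1)(N-2)/2}=+1$ working out as you claim; a final subtraction of $\sum_{k=1}^{N-1}y^{-k}\vec g_k$ from the last row inserts the $(u/y)^{N-1}$ factor. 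Your version is more self-contained (it proves both the literature formula and the stated one in passing) and, since it relies only on the finite partial-fraction identity, it sidesteps the convergence question raised by the paper's termwise geometric-series expansion, which strictly needs $|u|<|y|$ on the integration range or a resummation/analytic-continuation argument; the paper's proof is shorter given the cited results and makes the intended point that the new statement is equivalent to the one in \cite{Fyodorov}. For the second equality your Laplace expansion plus the adjugate formula is essentially the same manipulation as the paper's multiplication by $C=G^{-1}$ followed by expansion, so no substantive difference there.
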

\begin{proof}
Eqs. \eqref{Prop3factorN} were stated in \cite{Fyodorov} following \cite{ForresterLiu,DesFor}, without the factors of  $(u/y)^{N-1}$. The equivalence of the two statements can be seen as follows.
Expanding the geometric series inside the determinant without these factors we have
\bee
&&\frac{1}{\det G}
\left\vert
\begin{matrix}
 g_{1,1} 
 & \ldots & g_{1,N} \\
 \vdots 
 & \ddots & \vdots  \\
g_{N-1,1} 
& \ldots & g_{N-1,N} \\
\int_0^\infty{du\varphi_1(u)}\sum_{j=0}^\infty\frac{u^j}{y^{J+1}} 
&   \ldots &
\int_0^\infty{du\varphi_N(u)}\sum_{j=0}^\infty\frac{u^j}{y^{J+1}}
\\
\end{matrix}
 \right\vert
 \nonumber\\
 &&=
 \left\vert
\begin{matrix}
 g_{1,1}  & \ldots & g_{1,N} \\
 \vdots 
 & \ddots & \vdots  \\
g_{N-1,1} 
& \ldots & g_{N-1,N} \\
\sum_{j=N}^\infty\frac{g_{j,1}}{y^{J+1}} 
&   \ldots &
\sum_{j=N}^\infty\frac{g_{j,N}}{y^{J+1}}
\\
\end{matrix}
 \right\vert \det[c_{i,j}]_{i,j=1}^N\ .
 \nonumber
\eee
If we perform the integrals in the last row, we obtain infinite series over generalised moment matrices $g_{k,l}$, the first $N-1$ of which can be removed by subtraction of the upper $N-1$ rows. Rewriting the last row as integrals and resumming the series we arrive at the first line of \eqref{Prop3factorN}.

The second line in \eqref{Prop3factorN} is obtained as follows. Using that $\det[c_{i,j}]_{i,j=1}^N=1/\mathcal{Z}_N$ and then multiplying the matrix $C$ with the matrix inside the determinant from the right, this leads to an identity matrix, except for the last row, as $C$ is the inverse of the finite, $N\times N$ dimensional matrix. Laplace expanding with respect to the last column leads to the desired result.
\end{proof}

Employing Proposition \ref{Prop3.1-5} in \cite{Fyodorov}, it is not difficult to see that from our Theorem \ref{MainTheorem} together with \eqref{degenerate}
we obtain an equivalent form of
 \cite[Theorem 4.1]{Fyodorov} for a single characteristic polynomial,
\cite[Theorem 3.4]{Fyodorov}  for its inverse and
\cite[Theorem 5.1]{Fyodorov}  for a single ratio.

\section*{Acknowledgments}
G.A. and T.W. were 
supported by the German research council
DFG through the grant  CRC 1283 ``Taming uncertainty and profiting from randomness and low regularity in analysis, stochastics and their applications''. 
E.S. was partially supported by the BSF grant 2018248 ``Products of random matrices via the theory of symmetric functions''.

We thank Maurice Duits, Peter Forrester and Mario Kieburg for useful discussions, and  
the Department of Mathematics at the Royal Institute of Technology (KTH) Stockholm for hospitality (G.A. and T.W.).
The School of Mathematics and Statistics of the University of Melbourne is equally thanked for support and hospitality, where part of this work was completed (G.A.).
Last but not least we that thank one of the referees for many useful remarks. 

\appendix

\section{Properties of Vandermonde determinants}\label{Vandermondeappendix}
In this appendix we first define the Vandermonde determinant in various equivalent ways. We then collect several of its properties when extending the number of variables by multiplication, and when reducing it by division through the corresponding factors.

\begin{defn}\label{DeltaDef}
The Vandermonde determinant of $N$ pairwise distinct variables $x_1,\ldots,x_N$ is denoted by $\Delta_N (x_1,\ldots,x_N)$ and can be represented in the following equivalent ways:
\be\label{Vandermonde}
\begin{split}
\Delta_N (x_1,\ldots,x_N)&=
\det \left[x_j^{N-i} \right]_{i,j=1}^{N} = \prod_{1\leq i < j \leq N} (x_i-x_j)
= \left\vert
\begin{matrix}
x_1^{N-1} & \ldots & x_N^{N-1} \\
\vdots & \vdots & \vdots \\
x_1 & \ldots & x_N \\
1 & \ldots & 1  \\
\end{matrix}
\right\vert
\\
&=(-1)^{N(N-1)/2} \det \left[x_j^{i-1} \right]_{i,j=1}^{N}\ .
\end{split}
\ee
For $N=1$ it follows that $\Delta_1(x_1)=1$, and we also formally define $\Delta_0=1$ in the absence of parameters.
\end{defn}

The Vandermonde determinant can be extended from $N$ to $N+M$ variables in the following way, when multiplied by $M$  characteristic polynomials.
\begin{lem}\label{Vandermondelemma}
The following extension formula holds for a
Vandermonde determinant of size $N$.  Let the $M$ parameters $\{ z_1,\ldots,z_M \}$ be pairwise distinct. Then it holds that
\be\label{extensionVandermonde}
\begin{split}
&\prod_{m=1}^{M} \prod_{n=1}^{N} (x_n-z_m) \Delta_N(x_1,\ldots,x_N) = \frac{\Delta_{N+M}(x_1,\ldots,x_N,z_1,\ldots,z_M)}{\Delta_M(z_1,\ldots,z_M)}\ . \\
\end{split}
\ee
\end{lem}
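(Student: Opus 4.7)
\medskip

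\noindent\textbf{Proof proposal.} The plan is to prove the identity by working with the product representation of the Vandermonde determinant given in \eqref{Vandermonde}, rather than with any determinantal form. This turns the claim into a purely combinatorial bookkeeping statement about how the factors of a product over pairs split when the set of variables is partitioned.

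First, I would introduce the shorthand $y_i = x_i$ for $i=1,\ldots,N$ and $y_{N+m} = z_m$ for $m=1,\ldots,M$, so that by \eqref{Vandermonde}
\begin{equation}
\Delta_{N+M}(x_1,\ldots,x_N,z_1,\ldots,z_M) = \prod_{1\leq i < j \leq N+M}(y_i - y_j).
\nonumber
\end{equation}
Next, I would split the index set $\{(i,j): 1\leq i<j\leq N+M\}$ into three disjoint pieces according to whether both indices lie in $\{1,\ldots,N\}$, both lie in $\{N+1,\ldots,N+M\}$, or one lies in each block. Correspondingly the product factorises as
\begin{equation}
\Delta_{N+M}(x_1,\ldots,x_N,z_1,\ldots,z_M) = \left[\prod_{1\leq i<j\leq N}(x_i-x_j)\right]\left[\prod_{1\leq i<j\leq M}(z_i-z_j)\right]\prod_{n=1}^{N}\prod_{m=1}^{M}(x_n-z_m).
\nonumber
\end{equation}

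Identifying the first two bracketed products with $\Delta_N(x_1,\ldots,x_N)$ and $\Delta_M(z_1,\ldots,z_M)$ respectively via \eqref{Vandermonde}, and dividing by $\Delta_M(z_1,\ldots,z_M)$ (which is nonzero since the $z_m$ are pairwise distinct), yields \eqref{extensionVandermonde} directly. The only subtlety to double-check is the sign convention: the definition \eqref{Vandermonde} uses the ordering $(x_i-x_j)$ for $i<j$, and this ordering is preserved on each of the three pieces, so no additional signs appear. There is essentially no obstacle — the identity is a reshuffling of indices — so the proof is little more than this partitioning of the product, but I would write it out explicitly because it will be invoked repeatedly in the main text (e.g.\ to rewrite $J$ in the proof of Theorem \ref{MainTheorem}).
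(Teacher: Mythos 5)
Your proof is correct. It differs from the paper's argument in route rather than in substance: the paper proves \eqref{extensionVandermonde} by induction over $M$, establishing the case $M=1$ from the product form of \eqref{Vandermonde} and then multiplying in one further variable $z_{M+1}$ at each step, whereas you obtain the identity in one stroke by writing $\Delta_{N+M}$ as the product over all pairs and partitioning the index set into the two diagonal blocks and the mixed block. Your partition argument is arguably the more transparent one, since it exhibits the full factorisation
\begin{equation}
\Delta_{N+M}(x_1,\ldots,x_N,z_1,\ldots,z_M)=\Delta_N(x_1,\ldots,x_N)\,\Delta_M(z_1,\ldots,z_M)\prod_{n=1}^{N}\prod_{m=1}^{M}(x_n-z_m),
\nonumber
\end{equation}
of which the lemma is an immediate corollary, and it avoids the inductive bookkeeping (the paper's induction step has to reinstate the factor $\prod_{l=1}^{M}(z_l-z_{M+1})$ by hand to pass from $\Delta_M$ to $\Delta_{M+1}$). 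The inductive proof, on the other hand, only ever manipulates one new variable at a time, which keeps each step a one-line application of the $M=1$ case. Your remark on the sign convention is the right thing to check and is handled correctly: with $y_i=x_i$ for $i\leq N$ and $y_{N+m}=z_m$, every mixed pair has the $x$-index smaller, so each mixed factor is $(x_n-z_m)$ exactly as on the left-hand side, and no extra signs arise.
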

\begin{proof}
We proceed by  induction over $M$.
Defining  $z_1 \equiv x_{N+1}$, the $M=1$ case can seen from inserting the definition  \eqref{Vandermonde} in product form
\be
\begin{split}
\prod_{n=1}^{N}(x_n-z_1) \Delta(x_1,\ldots, x_N) &= \prod_{n=1}^{N} (x_n-x_{N+1}) \prod_{1\leq i<j \leq N} (x_i-x_j)
= \prod_{1\leq i<j\leq N+1} (x_i-x_j) \\
&= \Delta_{N+1}(x_1,\ldots,x_N, z_1)\ .
\end{split}
\ee
We now assume that Eq. \eqref{extensionVandermonde} is valid for any $M$. The induction step $M \rightarrow M+1$ is straightforward:
\be
\begin{split}
\prod_{m=1}^{M+1}\! \prod_{n=1}^{N} (x_n-z_m) \Delta_N(x_1,\ldots,x_N)
&= \prod_{n=1}^{N} (x_n-z_{M+1})
\frac{\Delta_{N+M}(x_1,\ldots, x_N, z_1,\ldots, z_{M})}{\Delta_M(z_1,\ldots, z_{M}) }
\\
&=\frac{\Delta_{N+M+1}(x_1,\ldots, x_N, z_1,\ldots, z_{M+1})}{\Delta_{M+1}(z_1,\ldots, z_{M+1}) }\ .
\end{split}
\ee
Using the induction assumption, multiplying by a factor of unity 
$\frac{\prod_{l=1}^M(z_l-z_{M+1})}{\prod_{l=1}^M(z_l-z_{M+1})}$
and using the definition  \eqref{Vandermonde} in product form, the formula \eqref{extensionVandermonde} for $M+1$ follows.
\end{proof}
For extended Vandermonde determinants it holds that
\be\label{Vandermondeswitch}
\Delta_{N+M}(x_1,\ldots,x_N, z_1,\ldots,z_M) = (-1)^{NM} \Delta_{N+M}(z_1,\ldots,z_M,x_1,\ldots,x_N)\ ,
\ee
by permuting rows in the determinant form in \eqref{Vandermonde}.

Let us introduce a notation for the Vandermonde determinant with a reduced number of indices. For  $L\leq N$ ordered indices $l_1,\ldots,l_L$ we define the reduced Vandermonde determinant of size $N-L$ by
\be\label{reducedVandermonde}
\Delta^{(l_1,\ldots,l_L)}_{N-L}(x_1,\ldots,x_N) \equiv \Delta_{N-L}(x_1,\ldots,x_{l_1-1},x_{l_1+1},\ldots,x_{l_L-1},x_{l_L+1},\ldots,x_N)\ ,
\ee
where the parameters $x_j$ with $j=l_1,\ldots,l_L$ are absent.
From the Definition \ref{DeltaDef} we obtain that for $L=N$ both sides are equal to unity.
We obtain the reduced Vandermonde  by the following formula.
\begin{lem}\label{reducedVandermondeLemma}
For $N\geq L$ the Vandermonde determinant of size $N-L$ obtained by removing the variables $x_{l_j}$, with $1\leq l_1<\ldots<l_L\leq N$, from the variables $x_{1},\ldots,x_N$ can be obtained via
\be\label{reducedVandermondeGeneral}
\Delta^{(l_1,\ldots,l_L)}_{N-L}(x_1,\ldots,x_N)= \prod_{j=1}^{L} (-1)^{N-l_j}  \frac{\Delta_N(x_1,\ldots,x_N) \Delta_L(x_{l_1},\ldots,x_{l_L})}{\prod_{j=1}^{L} \prod_{\substack{n=1 \\ n \neq l_j}}^{N} (x_n - x_{l_j})}\ .
\ee
\end{lem}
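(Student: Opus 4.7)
The plan is to give a direct combinatorial proof exploiting the product representation $\Delta_N(x_1,\ldots,x_N)=\prod_{1\le i<j\le N}(x_i-x_j)$. First, I would partition the set of pairs $\{(i,j):1\le i<j\le N\}$ into three disjoint classes according to how many of $i,j$ lie in $\{l_1,\ldots,l_L\}$ (namely zero, one, or two). This immediately yields the factorisation
\[
\Delta_N(x_1,\ldots,x_N)=\Delta_{N-L}^{(l_1,\ldots,l_L)}(x_1,\ldots,x_N)\cdot\Delta_L(x_{l_1},\ldots,x_{l_L})\cdot C,
\]
where $C$ collects the ``cross'' factors with exactly one index in $\{l_1,\ldots,l_L\}$. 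To evaluate $C$, I would observe that the natural factor appearing in $\Delta_N$ is $(x_{\min(n,l_j)}-x_{\max(n,l_j)})$, so for fixed $j$ each $n\notin\{l_1,\ldots,l_L\}$ with $n>l_j$ contributes $-(x_n-x_{l_j})$. Counting such $n$ as $(N-l_j)-(L-j)$ (by subtracting the $L-j$ indices $l_k>l_j$) gives
\[
C=(-1)^{L(N-L)+L(L+1)/2-\sum_{j=1}^L l_j}\prod_{j=1}^L\prod_{n\notin\{l_1,\ldots,l_L\}}(x_n-x_{l_j}).
\]

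Second, I would analyse the denominator on the right-hand side of the claim by the same dichotomy. Splitting the range $\{n\ne l_j\}$ into $n\notin\{l_1,\ldots,l_L\}$ and $n=l_k$ with $k\ne j$, the second piece is a product over ordered pairs of distinct indices in $\{l_1,\ldots,l_L\}$. Pairing $(j,k)$ with $(k,j)$ gives $(x_{l_k}-x_{l_j})(x_{l_j}-x_{l_k})=-(x_{l_j}-x_{l_k})^2$, hence
\[
\prod_{j=1}^L\prod_{\substack{n=1\\n\ne l_j}}^N(x_n-x_{l_j})=(-1)^{L(L-1)/2}\,\Delta_L(x_{l_1},\ldots,x_{l_L})^2\cdot\prod_{j=1}^L\prod_{n\notin\{l_1,\ldots,l_L\}}(x_n-x_{l_j}).
\]

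Finally, I would eliminate the common double product $\prod_{j,\,n\notin\{l_1,\ldots,l_L\}}(x_n-x_{l_j})$ between the two identities. Solving the first for $\Delta_{N-L}^{(l_1,\ldots,l_L)}$ and substituting the second yields exactly the formula in the statement, with overall sign $(-1)^{L(N-L)+L(L+1)/2-\sum_j l_j+L(L-1)/2}$. The identity $L(L+1)/2+L(L-1)/2=L^2$ then collapses the exponent to $LN-\sum_j l_j$, so the sign is $(-1)^{LN-\sum_j l_j}=\prod_{j=1}^L(-1)^{N-l_j}$, as required. The main obstacle is precisely this sign bookkeeping: one must carefully track the $(-1)$'s produced by the order-reversal in $C$ and by the double-count in the $L\times L$ block of the denominator. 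Once both sign contributions are written side by side the cancellation is immediate, and the result follows.
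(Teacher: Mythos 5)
Your proof is correct, but it takes a different route from the paper. You argue directly from the product representation $\Delta_N=\prod_{i<j}(x_i-x_j)$, partitioning the pairs according to whether zero, one, or two indices lie in $\{l_1,\ldots,l_L\}$, which factorises $\Delta_N$ as $\Delta_{N-L}^{(l_1,\ldots,l_L)}\cdot\Delta_L(x_{l_1},\ldots,x_{l_L})\cdot C$, and you then evaluate the cross factor $C$ and the denominator by explicit sign counting; your bookkeeping checks out (the count $(N-l_j)-(L-j)$ of indices $n\notin\{l_1,\ldots,l_L\}$ with $n>l_j$, the factor $(-1)^{L(L-1)/2}\Delta_L^2$ from the ordered pairs inside the block, and the final collapse of the exponent to $LN-\sum_j l_j$ are all right, and the formula is even consistent at the degenerate case $L=N$ where both sides are unity). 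The paper instead proves the lemma by induction on $L$: the case $L=1$ is verified directly by splitting $\prod_{n\neq l_1}(x_n-x_{l_1})$ into $n<l_1$ and $n>l_1$, and the induction step peels off one removed index $l_{L+1}$ at a time and solves for the smaller reduced Vandermonde. Your one-shot factorisation makes the origin of the sign $\prod_j(-1)^{N-l_j}$ completely transparent and avoids the inductive manipulation, at the price of a global sign count that must be done carefully; the paper's induction localises the sign analysis to a single removed variable per step, which is less error-prone but hides where the total sign comes from. Both arguments are elementary and equally rigorous.
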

\begin{proof}
The proof is again done by induction. For $L=1$ we have
for the right hand side of \eqref{reducedVandermondeGeneral}\footnote{In our conventions empty products equal to unity.}
\be
\begin{split}
 (-1)^{N-l_1}  \frac{\Delta_N(x_1,\ldots,x_N) }{ \prod_{\substack{n=1 \\ n \neq l_1}}^{N} (x_n - x_{l_1})}
&=\prod_{n=1}^{l_1-1}\frac{1}{(x_n-x_{l_1})}
\prod_{n=l_1+1}^{N}\frac{-1}{(x_n-x_{l_1})}\prod_{1\leq i < j \leq N} (x_i-x_j)
\\
&=
\prod_{\substack{1\leq i < j \leq N\\ i,j\neq l_1}} (x_i-x_j)
=\Delta^{(l_1)}_{N-1}(x_1,\ldots,x_N)\ .
\end{split}
\ee
For the induction step we assume that \eqref{reducedVandermondeGeneral} holds for any $N>L\geq 1$. From the definition of the reduced Vandermonde \eqref{reducedVandermonde} as a product it is not difficult to see that
\be
\begin{split}
\Delta^{(l_1,\ldots,l_L)}_{N-L}(x_1,\ldots,x_N)&=(-1)^{N-l_{L+1}}
\prod_{\substack{n=1\\n\neq l_1\ldots,l_{L+1}}}^N (x_n-x_{l_{L+1}})\
\Delta_{N-L-1}^{(l_1,\ldots,l_{L+1})}(x_1,\ldots,x_N)\\
&= \prod_{\substack{n=1\\n\neq l_{L+1}}}^N (x_n-x_{l_{L+1}})\
\frac{(-1)^{N-l_{L+1}}}{\prod_{j=1}^L(x_{l_j}-x_{l_{L+1}})}\ \Delta_{N-L-1}^{(l_1,\ldots,l_{L+1})}(x_1,\ldots,x_N)\ .
\end{split}
\ee
Using the induction assumption for the left hand side and
solving this equation for the reduced Vandermonde determinant of size $N-L-1$ on the right hand side, we obtain
\be
\begin{split}
\Delta_{N-(L+1)}^{(l_1,\ldots,l_{L+1})}(x_1,\ldots,x_N)&=\!
\frac{(-1)^{N-l_{L+1}}\! \prod_{j=1}^L(x_{l_j}-x_{l_{L+1}})}{\prod_{1=n\neq l_{L+1}}^N (x_n-x_{l_{L+1}})}
\frac{\Delta_N(x_1,\ldots,x_N) \Delta_L(x_{l_1},\ldots,x_{l_L})}{\prod_{j=1}^{L} (-1)^{N-l_j}  \prod_{\substack{n=1 \\ n \neq l_j}}^{N} (x_n - x_{l_j})}
\\
&= \prod_{j=1}^{L+1} (-1)^{N-l_j}  \frac{\Delta_N(x_1,\ldots,x_N) \Delta_{L+1}(x_{l_1},\ldots,x_{l_{L+1}})}{\prod_{j=1}^{L+1} \prod_{\substack{n=1 \\ n \neq l_j}}^{N} (x_n - x_{l_j})}\ ,
\end{split}
\ee
which finishes the proof.
\end{proof}



\end{document}